\newtheorem{theorem}{Theorem}
\newtheorem{lemma}[theorem]{Lemma}
\newtheorem{proposition}[theorem]{Proposition}
\newcommand{\Xomit}[1]{ }
\newenvironment{proof}[1][Proof]{\textbf{#1.} }{\ \rule{0.5em}{0.5em}}
\mathchardef\mhyphen="2D
\newcommand{\eps}{\upvarepsilon}
\def\eps{\varepsilon}
\begin{document}




\title{Open-end bin packing: new and old analysis approaches}

\date{}

\author{Leah Epstein\thanks{
Department of Mathematics, University of Haifa, Haifa, Israel.
\texttt{lea@math.haifa.ac.il}. }}

\maketitle

\begin{abstract}
We analyze a recently introduced concept, called the price of clustering, for variants of bin packing called open-end bin packing problems (OEBP). Input items have sizes, and they also belong to a certain number of types. The new concept deals with the comparison of optimal solutions for the cases where items of distinct types can and cannot be packed together, respectively. The problem is related to greedy bin packing algorithms and to batched bin packing, and we discuss some of those concepts as well. We analyze max-OEBP, where a packed bin is valid if by excluding its largest item, the total size of items is below $1$. For this variant, we study the case of general item sizes, and the parametric case with bounded item sizes, which shows the effect of small items. Finally, we briefly discuss min-OEBP, where a bin is valid if the total size of its items excluding the smallest item is below $1$, which is known to be an entirely different problem.
\end{abstract}

\section{Introduction}
We study a variant of bin packing, called {\it Open End Bin Packing} (OEBP). The input for one-dimensional bin packing problems consists of items with rational sizes in $(0,1]$. There may be additional attributes for some variants and for certain analysis approaches. In bin packing problems, the goal is to partition the items into subsets called bins, under conditions on the contents of a bin. In the classic or standard bin packing problem \cite{J74,JoDUGG74}, a bin is a set of items whose total size is at most $1$. In open end bin packing problems \cite{Zhang98,LDY01,YangL03,EL08,GZ09,LYX10A,LYX10T,BEL20,EL20},  a bin is a set of items such that by removing one item, the total size is below $1$. In Max-OEBP, the removed item is the largest item, and in Min-OEBP the removed item is the smallest one. We are interested in greedy algorithms for these two problems. Such algorithms may be offline algorithms, which view the input items as a set, and online algorithms, which receive items one by one, and pack every arriving item before seeing the next one.

\noindent{\bf Measures.}
An approximation algorithm has an asymptotic approximation ratio of at most $R$, if there
exists a constant $C\geq 0$ (which is independent of the input),
such that for any input $I$, the cost of the algorithm for $I$
does not exceed the following value: $R$ times the optimal cost
for this input plus $C$. The asymptotic approximation ratio is the infimum $R$ for which the inequality holds for every input.
If $C=0$, then the approximation ratio is
called absolute. A specific optimal (offline) algorithm as well as its cost are denoted by
$OPT$ ($OPT(I)$ is sometimes used for the cost for a fixed input $I$). An alternative
definition of the asymptotic approximation ratio is the supreme
limit of the ratio between the two costs of the algorithm and of $OPT$, as a
function of the latter cost, taking the maximum or supremum over
the inputs with the same optimal cost. If the algorithm is online, the term {\it approximation ratio} is sometimes replaced by the term {\it competitive ratio}, but the definition is unchanged, and we will use the term approximation ratio uniformly.
For bin packing problems and approximation algorithms for them, the asymptotic measures are considered to be more meaningful and therefore we will mostly discuss asymptotic measures. When we do not write if the measure is asymptotic or absolute, the results hold for both measures.

\noindent{\bf Bin packing with item types.}
We describe two models for bin packing problems, where the input is presented together with a fixed partition into subsets. This partition is a part of the input, and not the partition into bins, which is required from an algorithm. The subsets of the partition of the input are also called item types.
Such models may be offline or online. The first variant out of such problems, called  batched bin packing, was defined as a model related to online algorithms. This is an intermediate model which has features of both the extremes of the offline problem and the online problem. It was studied for combinatorial optimization problems, including classic bin packing \cite{GJYbatch,BBDGT,Do15,Epstein16}, and here we define it for bin packing problems. Given an integer parameter $\ell$, items are
presented in $\ell$ batches, for an integer $\ell \geq 1$. For each
batch, the algorithm receives all its items at once, and these
items are to be packed irrevocably before the next batch is
presented (or before termination, if the current batch is not the last one).
Two sub-models have been defined. The first one is where the algorithm may use bins from previous batches for the packing, in which case the crucial difficulty that the algorithm is faced with is that it does not see future batches  when a batch is packed. In the second one, which we study here, different batches cannot share bins. We stress that the two versions are semi-online models, in the sense that the algorithm sees the current batch and past batches, but it does not know anything about future batches. In particular, such batches may be empty. However, the second model, which does not allow combining items from different batches in bins is equivalent to an offline model because the different batches are packed independently. It is known that for $\ell=2$, the tight bound on the asymptotic approximation ratio (for the second sub-model and classic bin packing) is $1.5$ \cite{Epstein16}, and for larger values of $\ell$ the ratio increases and grows to approximately $1.69103$. In principle, it is possible to find an optimal solution for every batch, but in some cases a different algorithm is easier to analyze, and still tight bounds are found.

The last model leads us to {\it bin packing with clustering}, which is an offline problem, and it is related to the variant of batched bin packing studied here (where items of different batches cannot share bins).
In this problem, every item has
a second attribute, called a cluster index or a color. A global
solution is one where items are packed without considering their
clusters, i.e., it is just a solution of the corresponding bin packing
problem for this input. A clustered solution is one where every
cluster or color must have its own set of bins, and items of
different clusters cannot be packed into a common bin.  This corresponds to cloud computing, where an input data stream is split quickly into sub-inputs, which are considered independently \cite{AESV}.
Obviously, in order to avoid
degenerate cases (for example, the case that every item has its own cluster), an assumption on the input is enforced. It is assumed that every cluster is sufficiently
large, and an optimal solution for each cluster has at least a fixed number of bins. Here, we assume that every cluster requires at least two bins, so that (for example) a situation where every item belongs to its own cluster will not be encountered.
This model was introduced by Azar et al.~\cite{AESV} (see also \cite{E20}) for classic bin packing. The goal is to compare optimal solutions, that
is, to compare an optimal clustered solution to an optimal global
solution, also called a globally optimal solution. We are
interested in the worst-case ratio over all valid inputs, and this
(absolute and asymptotic) approximation ratio is called {\it price of clustering} (PoC). From an algorithmic
point of view, the goal is to design an approximation algorithm
for which it is not allowed to mix items of different clusters,
while the algorithm still has a good approximation ratio compared
to a globally optimal solution. Our results will hold for both the absolute and the asymptotic approximation ratio, and thus we just use the term PoC.

The last two models are related, though in batched bin packing (with separate bins for the different batches), the number of batches is the analyzed parameter. That is, there is a small number of clusters which are typically large. In bin packing with clustering, there is typically a large number of clusters, which may be relatively small, and a lower bound on their optimal numbers of bins can be seen as a parameter.
We study OEBP with respect to greedy algorithms and packing with item types.
We mostly focus on Max-OEBP, and provide some observations for Min-OEBP as well.  Note that we analyze Max-OEBP with respect to multiple aspects, and in particular we consider greedy algorithms.
As mentioned above, in order for the results to be comparable with other results for bin packing, our analysis is for the asymptotic approximation ratio. For bin packing with clustering, previous work is for the absolute approximation ratio and our results are valid for both measures.

In the general variants of the problems it is assumed that the size of an item may be very close to the capacity of a bin, or even equal to it. However, in many applications bin capacities are larger than item sizes. The parametric case is the one where sizes do not exceed a fixed value $\beta$, where $0<\beta\leq 1$. For some problems, adding the size constraint makes the problem different, and more interesting. There are models where  the interesting cases are only those where $\beta$ is a reciprocal of an integer, but in other cases there are additional values of $\beta$ that should be considered.
For classic bin packing, the asymptotic approximation ratio (for essentially all reasonable algorithms) tends to $1$ as $\beta$ tends to zero. However, there are variants where the effect of small items does not decrease the ratio that much \cite{ACFR16}. The problems studied here also have the property that the asymptotic approximation ratio tends to $1$ for small $\beta$, but the price of clustering is never close to $1$ (this is the case also for classic bin packing).

For the PoC, we assume that every cluster requires at least two bins. The resulting constraint on the items of a cluster for this assumption is slightly different for distinct variants of bin packing, but we feel that for comparison, the constraint should be defined in a similar way based on optimal solutions (for the appropriate variants). Here, we show that the PoC is equal to $3$ for max-OEBP and to $4$ for min-OEBP. We also find tight bounds for the general parametric case of items of sizes in $(0,\beta]$ and any $0<\beta<1$, which requires a careful analysis, and we shed some light on the parametric case for min-OEBP. For small values of $\beta$, the PoC tends to $2$ for both variants.
The results of \cite{AESV} for classic bin packing are based on the assumption that the cost of an optimal solution for every cluster is at least $k$ for an integer parameter $k\geq 2$. In the case $k=2$, they show that the PoC is exactly $2$, but for $k=3$, it is strictly below $2$, and more
specifically, it is at most $1.951$ (but at least $1.93344$). These bounds were slightly improved for $k=3$ in \cite{E20} to $1.93667$ and $1.93558$, respectively, where the case of larger $k$ is studied as well. The example of \cite{AESV} and $k=2$ holds only for $\beta=1$, but it is not difficult to see that $2$ is the tight bound for $k=2$ and any $\beta<1$, by using clusters with $N+1$ jobs of size $\frac 1N$ each for a large integer $N$. We find here that unlike classic bin packing, the PoC is much smaller for max-OEBP and for min-OEBP if all items are small.


\noindent{\bf Literature review of algorithmic results.}
Greedy algorithms for classic bin packing were studied already in the early stages of the study of such optimization problems \cite{JoDUGG74,J74}.
The simplest algorithm for classic bin packing is Next Fit (NF). This algorithm assigns the first item into an empty bin, and this bin becomes active. At every time, there will be one active bin. For each additional item, it is packed into the active bin if possible, and otherwise a new bin is defined as active and the new item is packed into it. A class of common greedy algorithms is called Any Fit (AF). These algorithms never pack a new item into an empty bin, if there is another possible way to pack the item. This class contains algorithms that choose a bin arbitrarily (out of the existing bins that can accommodate the item), but also  First Fit (FF), that selects the bin of minimum index (where the item can be packed), and  Worst Fit (WF), that selects the bin of smallest load or  maximum free space (if the item can be packed there). There are other intuitive choices that an algorithm may select. There are also offline versions of these algorithms, where items are sorted by non-increasing size. The word {\it Decreasing} is added to the names of the algorithms, and the names of the algorithms are abbreviated by NFD, AFD, FFD, and WFD. These algorithms were defined approximately fifty years ago \cite{J74,JoDUGG74}.
It is known that the asymptotic approximation ratio of FF and BF is $1.7$ \cite{JoDUGG74} (in fact, the absolute approximation ratio is also $1.7$ \cite{DS12}). For WF (and AF in general) and NF, the ratio is $2$ \cite{J74} (there are some other special cases of AF, for which the ratio is equal to $1.7$).
The asymptotic approximation ratios of the sorted variants are smaller, for example, the ratio for FFD is $\frac{11}9$ \cite{J74,DLHT}. For NFD, the asymptotic approximation ratio is almost $1.7$ \cite{BC81}, and for AFD algorithms in general, the asymptotic approximation ratio is at most $1.25$ \cite{J74,JoDUGG74}.
Most parametric results are for classic bin packing.

Greedy algorithms for classic bin packing were also analyzed with respect to the parametric case. For example, the asymptotic approximation ratio of FF if all items have sizes in $(0,\frac 1t]$ for an integer $t\geq 2$ is $\frac{t+1}t$ \cite{JoDUGG74}, and the cases where the parameter $\beta$ is not a reciprocal of an integer are identical to the cases where it is (for example, the cases $\beta=0.4$ and $\beta=0.5$ are equivalent). For NF and WF, the asymptotic competitive ratio is $2$ for $\beta>\frac 12$. For reciprocals of integers, it is $\frac{t}{t-1}$, and more generally, for $\beta\leq \frac 12$, the ratio is exactly $\frac{1}{1-\beta}$ \cite{J74}. There is also work on FFD and NFD in the parametric case \cite{J74,Cs93,Xu00,BC81}.
Since greedy algorithms without sorting are greedy algorithms, we mention that other online algorithms are being studied continuously  during the last fifty years \cite{BBDEL_ESA18,BBDEL_newlb}.
The
articles of van Vliet \cite{Vliet92} and Balogh,
B{\'e}k{\'e}si, and Galambos \cite{BBG} contain lower bounds on the asymptotic approximation ratio of online algorithms, where in addition to the
lower bounds for the standard case also lower bounds for the parametric
case, and the methods for designing algorithms  \cite{LeeLee85,RaBrLL89}
can also be used for parametric versions.

Four variants of OEBP were studied in the past. The first two (which are not those studied here) are online problems, where items are packed into a bin in the order of arrival, and the total size for a valid bin excluding the item that arrived last has to be below $1$.
In the unfair variant \cite{LDY01,Zhang98}, the optimal offline solution can reorder the items, so the last item of a bin is not necessarily the one that arrived last. In the fair variant \cite{YangL03,BEL20}, all algorithms have to obey the order of arrival.
The unfair variant was studied with respect to a certain class of algorithms for the parametric case \cite{Zhang98}. Since an offline algorithm can reorder while the online algorithm cannot reorder, the bounds are based on the simple property (used also here) that any bin of an offline solution has items whose total size is below $1+\beta$, while an online algorithm has items of total size above $1$ packed into almost every bin. This upper bound holds for many algorithms for other variants of OEBP.
For the fair online variant, Yang and Leung \cite{LDY01} proved lower bounds on the asymptotic approximation ratio (which were slightly improved later \cite{BEL20}). In particular, they showed a lower bound of $1.630297$. An online algorithm of asymptotic approximation ratio $1.691561$ was designed \cite{BEL20}. For the case where there are no items of size $1$, the lower bound of  \cite{LDY01} is much smaller ($1.415715$), and the results of Zhang \cite{Zhang98} imply that there is a simple online algorithm whose asymptotic approximation ratio is $1.5$ for this case. For this case, the algorithm of \cite{BEL20} has an asymptotic approximation ratio of at most $1.44465$.

Max-OEBP was not studied as an online problem, but algorithms similar to those of the ordered problem yield the same bounds. Offline algorithms were studied and in particular, FFD was studied for the case with items of sizes strictly below $1$, and its asymptotic approximation ratio is $1.5$ \cite{theFFD}.
As for Min-OEBP, greedy and online algorithms were analyzed for this variant
\cite{LYX10T}. In particular, several algorithms including  FFD and First Fit
Increasing (FFI) have asymptotic approximation ratios of $\frac{71}{60}\approx
1.18333$, for NF, WF, and FF these ratios are $4$, $3$, and $2$, respectively. An improved online algorithm was designed recently \cite{EL20}. The paper \cite{LYX10T} also contains a partial parametric analysis.

There are several motivations of studying greedy algorithms. These algorithms are simple and natural, easy to implement, and they are frequently found in the industry. We analyze them in the same work as bin packing with item types since greedy algorithms are useful in the analysis of the PoC and batched bin packing.
Definitions for online algorithms and greedy algorithms for OEBP problems should carefully take into account the property that packed bins must remain valid throughout the execution of an algorithm.


\noindent{\bf Notation.}
For a fixed input, let $\ell$ be the number of clusters. Let
$OPT_i$ be the number of bins in an optimal solution for the $i$th
cluster, whose input is $I_i$. We let $I$ be the set of items
$I=\bigcup_{1 \leq i \leq \ell} I_i$, where $n=|I|$. Let $OPT$ be
a globally optimal solution for $I$, and its cost is denoted by $OPT(I)$, that is, we use the notation of an optimal solution for the globally optimal solution.
We let $A_i$ be the number of bins in the output of an algorithm for cluster $i$, which may be an optimal solution or any other solution. For batched bin packing, we have batches rather than clusters, but the notation defined here is unchanged.
For an algorithm ALG and an input $I$, the cost of ALG, which is the number of bins it creates for $I$ is denoted by $ALG(I)$.
In some cases, We will use weights for the analysis of the upper bound. Weights allow us to compare two
solutions, using the fact that the total
weight of all input items is consistent. For any weight function $w$ defined in this work, we will use $W$ to denote the total weight of all items of $I$, and $W_j$ denotes the total weight for $I_J$.

\noindent{\bf Structure of the paper and results.}
The results for the PoC appear after we deal with the analysis of greedy algorithms. This is done not only because greedy algorithms are natural and interesting in their own right, but also because greedy algorithms are used for the analysis for batched bin packing and bin packing with clustering. While the model assumes that every batch or cluster can be packed using an optimal packing, it is often the case that using a greedy algorithm for the analysis instead still allows us to find tight bounds.

In Section \ref{greed} we analyze greedy algorithms for max-OEBP. In particular, in Section \ref{wosort} we analyze algorithms that do not sort the input. We find tight bounds on the asymptotic approximation ratios of NF, WF, and FF. The main result of this part is the tight analysis for FF as a function of $\beta$. While the tight bound for NF (and WF) is $1+\beta$, for FF this bound is tight only for values of $\beta$ that are reciprocals of integers, and for other values it is smaller.  In Section \ref{wsort}, we analyze greedy approaches for sorted inputs, and we study NFD and FFD. Interestingly, NFD also performs better exactly for the same values of $\beta$ as FF, but the effect of sorting is not large. For all these algorithms, the overall bound (that is, the bound for $\beta=1$) is $2$.
For FFD, the overall bound is smaller and equal to $\frac 32$. (this was known for the case where there are no items of size $1$ in the input \cite{theFFD}).
Section \ref{ittypes} is dedicated to inputs with item types, for max-OEBP. We show that for any number of batches $\ell \geq 2$, the tight asymptotic approximation ratio is $2$, and we prove tight bounds for any $\beta$. In Section \ref{pocc} we analyze the price of clustering. The main contribution of this part is a careful analysis of the PoC as a function of $\beta$. The value of the PoC is between $2$ and $3$, depending on the value of $\beta$.
In Section \ref{mino} we discuss the PoC for min-OEBP. We prove the general case, where the value of the PoC is $4$, analyze particular cases, and we provide examples explaining why we do not perform a complete analysis. Omitted proofs can be found in the Appendix.



\section{Greedy algorithms for Max-OEBP}\label{greed}
We start the analysis of Max-OEBP with greedy algorithms, first those that do not sort the input and therefore they are in fact online algorithms, and afterwards those that sort the input and apply a greedy algorithm for it. The last class of algorithms can be seen as semi-online algorithms where the input arrives in a  sorted order.
We discuss several common and natural greedy algorithms. In most cases we discuss not only general inputs but also parametric inputs. We usually use the parameter $\beta$ as an upper bound on the sizes of items, that is, item sizes are in $(0,\beta]$.  The algorithms were originally defined for classic bin packing, and here we stress the differences in the definitions of these algorithms for the studied problem.

\subsection{Algorithms without sorting}\label{wosort}
The algorithm First Fit (FF) for this variant is defined as follows. Every item in the list is packed into the bin of the minimum index such that
the item can be added to this bin. An item can be added to a bin in the case where it will remain valid after adding the item, which means that it will satisfy the properties of a bin.  The required property for adding an item is that after adding this item, it is still the case that by removing the largest item of the bin, the total size of items will be smaller than $1$. If there is no such bin, a new bin is used, which is always possible. Similarly, one can apply AF (and in particular WF) or NF. AF algorithms also select a bin out of those that can receive the item (based on the same property), and uses a new bin if there is no such bin. There are two variants of WF, with respect to the choice of a bin out of bins where an assignment will result in a feasible bin.
In the first variant, a new item is added to the existing bin where the current total size is the smallest (and the resulting total size will also be the smallest), and in the second variant, a new item is added to the bin where after the assignment, the total size excluding the largest item will be the smallest.

Note that it is possible that an item is added to a bin whose total size is above $1$. Consider a bin with three items of size $0.35$. If an item of size $0.25$ is added, the bin remains valid. On the other hand, for all these algorithms, any of the outputs will satisfy the property that every bin, possibly excluding the last one, has a total size of items of at least $1$. This holds since a bin whose total size of items is below $1$ can receive an additional item of any size. We will see later that this property does not hold for min-OEBP. For every valid solution, the total size of items for every bin excluding one item is below $1$, and thus, if items sizes are bounded from above by a parameter $\beta$, bins have total sizes below $1+\beta$ (and this holds for min-OEBP as well).

We start with the analysis of NF and WF. In the case of Max-OEBP, NF and both versions of WF have identical behaviours with respect to the asymptotic approximation ratio. Recall that we analyze the case where item sizes are in $(0,\beta]$ for $0<\beta\leq 1$, and therefore the analysis covers both the general case ($\beta=1$) and the parametric case ($\beta<1$). The same upper bound construction is also valid for FF, but it is not always tight as we will see later.

\begin{theorem}\label{firstt}
The asymptotic approximation ratio of NF, of both variants of WF and of AF in general with the parameter $\beta$ is exactly $1+\beta$. It is equal to $2$ for the general case $\beta=1$.
\end{theorem}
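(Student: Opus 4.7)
The proof splits into matching upper and lower bounds.

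For the upper bound, I would combine two facts. First, as highlighted just before the theorem, every packed bin in the output of any NF/WF/AF-type algorithm, except possibly the last, has total load at least $1$: otherwise a bin of load below $1$ could still accept any further item of size at most $\beta\le 1$ while remaining valid, and the algorithm would not have opened a new bin. Second, every optimal bin has total load strictly less than $1+\beta$, because its total minus its largest item is below $1$ and its largest item has size at most $\beta$. Writing $W$ for the total size of the input and $\mathrm{ALG}$ for the number of bins produced, the two facts give $\mathrm{ALG}-1\le W$ and $W<(1+\beta)\,\OPT$, hence $\mathrm{ALG}<(1+\beta)\,\OPT+1$ and an asymptotic ratio at most $1+\beta$.

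For the matching lower bound, I would use the following construction. Let $k=\lceil 1/\beta-1\rceil$ and $\alpha=(1-\beta)/k$, so that $\alpha\le\beta$ and $\beta+k\alpha=1$ (in the boundary case $\beta=1$ take $k=0$ and drop the $\alpha$-items; when $\beta=1/t$ for an integer $t$, $\alpha$ coincides with $\beta$). For large integers $N$ and $K'$ with $\beta K'$ integral, set $\eps=1/K'$ and form the input from $N$ items of size $\beta$, $Nk$ items of size $\alpha$, and $M=N(\beta K'-1)$ items of size $\eps$. Present them in the order: all $M$ items of size $\eps$ first, followed by $N$ identical blocks of the form $(\beta,\alpha,\ldots,\alpha)$ of length $k+1$.

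I would then argue that NF, both variants of WF, and every AF-type algorithm produce the same packing on this input. The $\eps$-items fill $M/K'=N(\beta-1/K')$ bins with exactly $K'$ tiny items each (total load $1$); once such a bin is at load $1$ it refuses every subsequent item, because adding any item of size at least $\eps$ leaves total-minus-largest equal to $1$, violating the strict Max-OEBP condition. Each of the $N$ blocks is therefore forced into a fresh bin: the $\beta$ opens it, the $k$ $\alpha$-items land there (no other valid target exists), and the bin then has load exactly $\beta+k\alpha=1$, so the next block's $\beta$ refuses to fit and a new bin opens. The algorithm uses $N(\beta-1/K')+N=N(1+\beta-1/K')$ bins. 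Meanwhile $\OPT$ merges each block with $\beta K'-1$ additional $\eps$-items into a bin of load $1+\beta-1/K'<1+\beta$, for a total of $N$ bins. The ratio is $1+\beta-1/K'$, which tends to $1+\beta$ as $N,K'\to\infty$.

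The main delicate step is pinning all three algorithmic variants down to the same packing. This reduces to checking (i) that a bin loaded with $K'$ items of size $\eps$ is inert to every future item, and (ii) that while a block is being served the in-progress block-bin is the unique valid target for the next $\beta$- or $\alpha$-item. Once these are in place, the requirement that AF and WF prefer an existing valid bin over opening a new one determines every step uniquely, and NF proceeds identically because its active bin is always either the current block-bin or a freshly opened one when the next block's $\beta$ arrives.
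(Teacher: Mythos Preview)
Your upper bound is correct and matches the paper's argument exactly.

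Your lower-bound construction is different from the paper's (the paper interleaves one $\beta$-item with $M+1$ items of size $\eps=(1-\beta)/M$ and has $\OPT$ pair up the $\beta$-items, whereas you front-load all tiny items and use an auxiliary size $\alpha$ so that each block fills a bin to load exactly $1$), but it does work for NF and for both WF variants, and that is all the theorem needs. Two small points: you should take $N$ to be a multiple of $K'$ so that $M=N(\beta K'-1)$ is divisible by $K'$ and the tiny phase really ends with full bins; and when a block-bin is being served, previously completed block-bins have strictly larger load and strictly larger load-minus-largest than the current one, so both WF variants pick the current bin, which is the argument you want.

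However, the step you flag as ``the main delicate step'' is stated incorrectly. Claim (ii), that ``the in-progress block-bin is the unique valid target for the next $\beta$- or $\alpha$-item,'' is false whenever $\alpha<\beta$ (i.e., whenever $\beta$ is not a reciprocal of an integer). A completed block-bin has one $\beta$ and $k$ $\alpha$'s with total load $1$; adding one more $\alpha$ gives total $1+\alpha$, and removing the largest item ($\beta$) leaves $1+\alpha-\beta<1$, so the bin \emph{is} a valid target. Hence FF, for example, will place the first $\alpha$ of block~$2$ into block~$1$'s bin, not into the freshly opened block-$2$ bin, and your ``every AF-type algorithm produces the same packing'' conclusion fails. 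This does not damage the theorem: the lower bound for ``AF in general'' only requires exhibiting \emph{some} AF algorithm achieving the bound, and WF already does (this is exactly how the paper phrases it). So the fix is simply to drop the universal AF claim and argue the WF case directly via the load comparison above.
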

Note that particular variants of AF may have a smaller asymptotic approximation ratio, and we show that for FF and an infinite number of values of $\beta$ this is indeed the case.

\noindent\begin{proof}
The upper bound follows from the property that any bin of the output, possibly excluding the last one, has a total size of items of at least $1$, while no bin can have a total size of items of $1+\beta$ or more. This last property holds (as explained above) since the largest item of any bin has size at most $\beta$ and the remaining items have a total size below $1$. This is valid for NF and any AF algorithm.

To prove the lower bound for NF and the two variants of WF (which will imply the lower bound for AF in general), let $M>2N$ be a large positive integer such that $M>\frac 1{\beta}$, and let $\eps=\frac{1-\beta}M$, where $\eps<\beta$.
For a large positive integer $N$, there are $N$ large items, each of size $\beta$, and $N(M+1)$ small items, each of size $\eps$.
The input arrives in the following order. Every large item is followed by exactly $M+1$ small items. The total size of every $M+1$ small items is $(M+1)\cdot \eps=\frac{M(1-\beta)}M+\eps=1-\beta+\eps<1$, so a large item can be packed into a bin with $M+1$ small items. However, the total size together with the large item is above $1$ (it is equal to $1+\eps$), so after $M+2$ items are packed and the next large items arrives, a new bin is opened. For NF, previous bins cannot be used again by definition, so it is clear that the process is repeated and $N$ bins with an identical packing are created. For WF, previous bins can be used, but we claim that the resulting packing is the same as for NF. Assume that some number of bins with one large item and $M+1$ small items were created. A large item cannot be added to such a bin, while a small item can still be added if $\eps$ is sufficiently small. Thus, a new large item is packed into a new bin, and then all bins are considered for any small item. However, while small items are being packed (there are $M+1$ such items), the new bin has the smallest total size of items. All bins have one large item each, so the new bin also has the smallest total size excluding the large item. Thus, all $M+1$ small items are packed into the new bin, as for NF. Thus, all the considered algorithms pack exactly $N$ bins.

An optimal solution acts as follows. First, it creates $\lfloor \frac{\beta}{1+\beta} \cdot N\rfloor \geq \frac{\beta}{1+\beta}\cdot N$ bins with two large items each.
Note that $2\cdot \lfloor \frac{\beta}{1+\beta}\cdot N \rfloor \leq \frac{2N\beta}{1+\beta} < N$ by $\beta<1$. The remaining large items are packed one per bin, and the number of such bins is $N-2\lfloor \frac{\beta}{1+\beta} \cdot N\rfloor \geq \frac{N(1-\beta)}{1+\beta}$. This number is non-negative since $\beta<1$.

The number of bins with at least one large item is $N-\lfloor \frac{\beta}{1+\beta}\cdot N \rfloor$, which is at least $N -\frac{\beta\cdot N}{1+\beta}=\frac{N}{1+\beta}$ and at most $N -\frac{\beta\cdot N}{1+\beta}+1=\frac{N}{1+\beta}+1$, and we show that all small items can be packed into these bins and one additional bin.
Any bin with one large item receives $\lceil \frac M{1-\beta}\rceil -1$ small items, where $(\lceil \frac M{1-\beta}\rceil -1)\eps<\frac M{1-\beta} \cdot \eps=1$ and  $\lceil \frac M{1-\beta}\rceil -1\geq \frac M{1-\beta}-1$. Any bin with two large items receives $M-1$ small items, where $(M-1)\eps=1-\beta-\eps$. The new bin can receive $\lceil \frac M{1-\beta}\rceil >M $ small items.

The number of small items packed with large items is at least $(\frac{M}{1-\beta}-1)\frac{1-\beta}{1+\beta}N+(M-1)\frac{\beta}{1+\beta}\cdot N=M\cdot N-\frac{N}{1+\beta}$. There are at most $N+\frac{N}{1+\beta}<2N<M$ remaining small items, so they can be packed into one bin. The total number of bins for the described packing is at most $\frac{N}{1+\beta}+2$, while the algorithm uses $N$ bins, and the lower bound on the asymptotic approximation ratio follows by letting $N$ grow to infinity.
\end{proof}

\medskip

We turn our attention to better AF algorithms and in particular FF. We found the asymptotic approximation ratio for AF algorithms in general, but some AF algorithms may have a better performance. We will show that for certain values of $\beta$ all AF algorithms including FF still have the same performance, and later we see that for other values of $\beta$, FF has a better performance.
Let $t=\lceil \frac 1{\beta} \rceil-1$, i.e., $t=0$ if $\beta=1$, and for $\beta \in [\frac 1{t+1},\frac 1t)$ for an integer $t$.

\begin{lemma}\label{le2}
For any integer $t \geq 1$, the asymptotic approximation ratio of any AF algorithm is at least $1+\frac{1}{t+1}$.
In the case $\beta=\frac 1{t+1}$ for an integer $t$, the asymptotic approximation ratio of FF and any AF algorithm is exactly $1+\beta$, and in particular, in the general case it is equal to $2$.
\end{lemma}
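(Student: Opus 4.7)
The upper bound $1+\beta$ in the case $\beta=1/(t+1)$ is inherited directly from Theorem \ref{firstt}, so the content of the lemma is to establish the matching lower bound $1+1/(t+1)$ for an arbitrary AF algorithm when $t\geq 1$; for $t=0$ (i.e.\ $\beta=1$) the matching lower bound of $2$ already follows from Theorem \ref{firstt}.

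My plan is to exhibit a worst-case input with two item types. Fix a large integer $m$ and set $N=m(t+1)$. Present $N(N-1)$ \emph{tiny} items of size $1/N$ followed by $N$ \emph{large} items of size $1/(t+1)$; both sizes are at most $\beta$ because $\beta\geq 1/(t+1)>1/N$, where the last inequality uses $m\geq 2$. I will then analyze AF in two phases. During the tiny phase, the single open bin at each step accepts further tinies because $(k-1)/N<1$ iff $k\leq N$, so AF fills each bin with exactly $N$ tinies before opening the next, using $N-1$ bins in total. A saturated tiny bin has total size $1$ and maximum item $1/N$, so when a large item of size $1/(t+1)>1/N$ is offered to it the incoming item becomes the new maximum and the remaining mass equals $1$, which is not strictly less than $1$; hence every saturated tiny bin rejects every large item. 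The larges therefore enter fresh bins, and within such a bin $(k-1)/(t+1)<1$ iff $k\leq t+1$, so AF packs exactly $t+1$ larges per bin and uses $m$ additional bins. In total, AF produces $(N-1)+m=m(t+2)-1$ bins. For the optimum, I pair each large item with $N-1$ tinies in one bin of total size $1/(t+1)+(N-1)/N$, in which the largest item is the large one and the remaining mass $(N-1)/N$ is strictly less than $1$; this packs the entire input into $N=m(t+1)$ bins. Letting $m\to\infty$ yields the ratio $(t+2)/(t+1)=1+1/(t+1)$, as required.

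The main point I must verify carefully is that this lower bound applies to \emph{every} AF algorithm, not merely to FF. This will follow because every placement on this input is forced by the AF rule (which forbids opening a new bin while some existing bin accepts the item): during the tiny phase only the currently open bin is available and every tiny fits there until saturation; during the large phase, every tiny bin rejects every large by the calculation above, and among the bins containing large items only the most recently opened one is still below its $t+1$-item capacity, so it is the unique acceptor. The only arithmetic proviso is $1/N<1/(t+1)$, which holds automatically for $m\geq 2$, so no additional care is needed to conclude the bound.
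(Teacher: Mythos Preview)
Your proof is correct and follows essentially the same approach as the paper: present all the tiny items (of size a reciprocal of a large integer) first so that any AF algorithm is forced to saturate bins with them, then present the large items of size $1/(t+1)$, which cannot enter the saturated tiny bins and hence fill fresh bins with exactly $t+1$ items each; the optimum instead pairs one large with a block of tinies. The paper carries an additional multiplicity parameter $N$ alongside its $M$, but your single-parameter version with $N=m(t+1)$ is a clean simplification of the same construction and yields the same limiting ratio $(t+2)/(t+1)$.
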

\begin{proof}
As we saw in the previous theorem, the upper bound of $1+\beta$ holds for all AF algorithms including FF.

Consider the following inputs. Let $M,N$ be large positive integers, and let $\eps=\frac{1}{M}$. There are $(t+1)\cdot N\cdot M$ large items of size $\frac{1}{t+1}$, and $(t+1)\cdot N\cdot M \cdot (M-1)$ small items, whose sizes are equal to $\eps$. An optimal solution has $(t+1)\cdot N\cdot M$ bins, each containing $M-1$ small items and one large item. Such bins are feasible since $(M-1)\cdot \eps =\frac{M-1}M<1$.

We prove the lower bound for any AF algorithm, including FF. Assume that an AF algorithm receives the small items before the large items. In this case, the small items are packed into $(t+1)\cdot N\cdot (M-1)$ bins (with $M$ items each), which cannot receive any further items (because there are no smaller items in the input), and $N \cdot M$ bins with large $t+1$ large items each. The total number of bins is $(t+1)\cdot N\cdot (M-1)+N\cdot M=(t+2)\cdot N\cdot M-(t+1)\cdot N$.
The asymptotic approximation ratio is at least $\frac{t+2}{t+1}-\frac 1M$, tending to  $1+\frac1{t+1}$ for $M$ growing to infinity.
\end{proof}


The last simple construction is not the best possible for all values of $\beta$.
We saw that the asymptotic approximation ratio of FF for max-OEBP
is $1+\frac{1}{\beta}$ for cases that the reciprocal of $\beta$ is
an integer, and the above analysis is tight for those cases. For the general case $\beta=1$ we got tight bounds of
$2$. To summarize, we got tight bounds of
$1+\beta=1+\frac{1}{t+1}$ for $\beta=\frac{1}{t+1}$.

In what follows we analyze FF for max-OEBP in the parametric case where $\beta<1$.
We will show that the asymptotic approximation ratio as a function of $\beta$ is:

$$\displaystyle{R_1(\beta)=\begin{cases} \vspace{0.3cm}
 1+\frac{4+t^2\cdot \beta}{(t+2)^2}  \ \ \ \ {\mbox    {for even} \  t, \  \ \ }  \beta \in [\frac 1{t+1},\frac 1t)\\
1+\frac{4+(t^2-1)\cdot \beta}{(t+1)\cdot(t+3)} \ \ \  \ {\mbox {for odd }  \  t, \  \ \ }  \beta \in [\frac1{t+1}, \frac 1t)  \\
\end{cases}} \ . $$

Note that the in special case $\beta=\frac{1}{t+1}$ we get the
same bounds as before. For all other cases we get a smaller bound since $\frac{4+t^2\cdot \beta}{(t+2)^2} < \beta$ is equivalent to $\beta>\frac{1}{t+1}$, and $\frac{4+(t^2-1)\cdot \beta}{(t+1)\cdot(t+3)} < \frac 1{t+1}$ is also equivalent to $\beta>\frac{1}{t+1}$.

For every interval of the form
$[\frac1{t+1}, \frac 1t)$ the function $R$ is continuous and
monotonically non-decreasing. For $t>1$, the functions are strictly increasing. The function is not continuous in general, as for example the image for $t=1$ is just the point $\frac32$, the image for $t=2$ is $[\frac 43, \frac {11}8)$, and the image for $t=3$ is $[\frac 54,\frac{23}{18})$.

\begin{lemma}\label{le3}
The asymptotic approximation ratio of FF for the parameter $\beta$ is at least $R_1(\beta)$.
\end{lemma}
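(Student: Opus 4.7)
The plan is to construct, for each $\beta\in[1/(t+1),1/t)$ and large integer $N$, an explicit input $I_N$ on which $\mathrm{FF}(I_N)/\OPT(I_N)\to R_1(\beta)$ as $N\to\infty$ and auxiliary perturbation parameters tend to $0$. In the spirit of the constructions used for Theorem~\ref{firstt} and Lemma~\ref{le2}, $I_N$ will consist of items of at most three sizes: big items of size $\beta$ (or slightly less), items of a second size $\gamma<\beta$ chosen as a function of $\beta$ and $t$, and, when needed, tiny items of size $\eps$ that is negligible compared to every other parameter. The counts of each kind will scale linearly in $N$, so that both $\mathrm{FF}(I_N)$ and $\OPT(I_N)$ are $\Theta(N)$.

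The strategy is to pick $\gamma$ so that FF, under a carefully chosen presentation order, opens a long sequence of bins each arriving at a specific suboptimal configuration, while offline $\OPT$ packs the same items much more densely. I would first classify the admissible configurations $(k,j)$ with $k$ bigs and $j$ $\gamma$-items, using the validity condition $\mathrm{sum}-\max<1$; then tune $\gamma$ and the counts so that an optimal bin contains $t+1$ bigs (the maximum possible, since $t\beta<1$), with residual $\gamma$-items and tinies absorbed inside those bins or placed in a small number of separate bins. The shape of $R_1$ guides the choice: the denominator $(t+2)^2$ for even $t$ suggests forcing FF into bins of one type with $t+2$ items each, while $(t+1)(t+3)$ for odd $t$ suggests forcing two types with $t+1$ and $t+3$ items respectively, consistent with the identity $(t+1)(t+3)=(t+2)^2-1$.

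The feeding order is the heart of the construction. The $\gamma$- and $\eps$-items are delivered before or interleaved with the bigs in a precisely calibrated pattern so that, after FF places only $k\le t$ bigs in a bin, the sum-excluding-max is already just below $1$, blocking both further bigs (which would push excl-max past $1$) and further tinies (which would do the same). After the FF trajectory is pinned down, $\mathrm{FF}(I_N)$ is read off by a routine bin-by-bin simulation, $\OPT(I_N)$ is obtained by exhibiting an explicit offline packing together with a matching volume/item-count lower bound, and letting $N\to\infty$ and the perturbations vanish gives the ratio $R_1(\beta)$.

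The main obstacle will be the precise choice of $\gamma$ and of the interleaving needed to hit $R_1(\beta)$ exactly on both pieces of the formula, and to prove rigorously that FF is forced into the intended trajectory. An OEBP-specific subtlety is that a bin can continue to accept tiny items after bigs have been placed as long as excl-max stays below $1$; this leaky behavior must be shut off at exactly the right moment by injecting just enough tinies. The parity-based split reflects whether the optimal FF bin sizes can be balanced symmetrically or must be split into two distinct sizes, which is what makes the bound piecewise; matching this combinatorial discreteness with the continuous dependence of $R_1$ on $\beta$ is the most delicate part of the construction.
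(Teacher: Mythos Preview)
Your overall framework --- three item sizes (big $\beta$, medium $\gamma$, tiny $\eps$), tinies presented first so they fill separate bins, then an interleaved stream of bigs and mediums that FF is forced to pack into fixed blocks --- is exactly the skeleton the paper uses. But what you have written is a plan, not a proof: the decisive step is the choice of $\gamma$, and you never commit to one. Until $\gamma$ is pinned down and the FF trajectory is simulated, nothing can be verified, and your heuristic for guessing $\gamma$ is off in a way that would likely send you on a long search.

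Specifically, your reading of the denominators is misleading. You infer that $(t+2)^2$ means ``FF bins of $t{+}2$ items'' and $(t+1)(t+3)$ means ``two FF bin types of sizes $t{+}1$ and $t{+}3$''. In the paper's construction the FF block always has exactly $t{+}1$ items; the parity split governs only the \emph{mix} within that block. For even $t$ a block is $\tfrac{t+2}{2}$ bigs followed by $\tfrac{t}{2}$ mediums, with $\gamma=\tfrac{2-t\beta}{t+2}$ chosen so that $\tfrac{t}{2}\beta+\tfrac{t+2}{2}\gamma=1$; for odd $t$ a block is $\tfrac{t+1}{2}$ bigs followed by $\tfrac{t+1}{2}$ mediums, with $\gamma=\tfrac{2-(t-1)\beta}{t+3}$ chosen so that $\tfrac{t-1}{2}\beta+\tfrac{t+3}{2}\gamma=1$. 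These identities are what make a block fill one FF bin exactly and block any further item; they are the missing idea in your sketch.

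Your intended $\OPT$ structure also diverges from the paper. You propose optimal bins holding $t{+}1$ bigs each; the paper instead uses one big, one medium, and $k\approx N(1-\gamma)$ tinies per optimal bin, so that the number of optimal bins equals the number of bigs. This makes the ratio computation a one-line substitution: $\mathrm{FF}/\OPT\approx (1-\gamma)+\tfrac{2}{t+1}$ in the odd case (and the even case is analogous), which simplifies directly to $R_1(\beta)$. Packing $t{+}1$ bigs per optimal bin is not impossible in principle, but it forces the mediums and tinies into separate optimal bins and leaves you with a more intricate count that you would still have to show hits $R_1(\beta)$ exactly.
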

\begin{proof}
We start with the case of odd $t$. Let $\gamma=\frac{2-(t-1)\cdot
\beta}{t+3}$.  Since $\beta<\frac 1t$, we have $\gamma>0$.
By $\beta \geq \frac{1}{t+1}$, we have $\gamma \leq
\frac{2-(t-1)/(t+1)}{t+3}=\frac{t+3}{(t+1)\cdot(t+3)}=\frac{1}{t+1}\leq
\beta$.
We will use the property $\frac {t-1}2 \cdot \beta + \frac {t+3}2 \cdot \gamma=1$ in what follows.
Let $N>t^2$ be a large positive integer, Let $k=\lceil N\cdot(1-\gamma) \rceil -1$ (where $k< N \cdot (1-\gamma)$), and let $\delta=\frac1{N}$.

The input consists of $N\cdot (t+1)$ items of size $\gamma$, $N \cdot (t+1)$ items of size $\beta$, and $k\cdot N\cdot (t+1)$ items of size $\delta$. We consider a solution that has $N \cdot (t+1)$ bins. The $N\cdot (t+1)$ bins have identical contents, where each bin has $k$ items of size $\delta$, one item of size $\gamma$, and one item of size $\beta$.
Since $k\cdot \delta+\gamma= \frac{k}{N}+\gamma < \frac{N\cdot(1-\gamma)}N+\gamma =1$, every bin contains a load below $1$ if its item of size $\beta$ is excluded, and the packing is valid.

FF receives the items in the following order. First, all items of sizes $\delta$ arrive. Every bin for such items has exactly $N$ items and cannot receive any other items (since there are no smaller items). The number of bins so far is $k\cdot(t+1)$. Next, the input will contain repeated batches of the following items. Every batch has $\frac{t+1}2$ items of every size, such that it has $\frac{t+1}2$ items of size $\beta$ followed by $\frac{t+1}2$ items of size $\gamma$. There are $2N$ such batches. Since $\frac{t+1}2 \cdot \gamma+\frac{t+1}2 \cdot \beta = 1-\gamma$, every batch can be packed into one bin. Given the items of a batch, adding an item of size $\gamma$ (or of size $\beta\geq\gamma$) results in a load of $1+\beta$ (or a larger load), the total size after excluding one item of size $\beta$ is at least $1$, so the items of any batch cannot be packed into bins of other batches (and they cannot be packed into bins with items of size $\delta$). Thus, FF packs the input into $k(t+1)+2N$ bins.

Since $k\geq N\cdot(1-\gamma)-1$ by the definition of $k$, and by the value of $\gamma$,  we find that the number of bins of FF is at least $(t+1)\cdot (N\cdot(1-(2-(t-1)\cdot
\beta)/(t+3))-1)+2N$, while an optimal solution has at most $N(t+1)$ bins. The ratio is at least ${1-(2-(t-1)\cdot
\beta)/(t+3))}-\frac 1N+\frac{2}{t+1}=1+\frac{2}{t+1}-\frac{2-(t-1)\cdot\beta}{t+3}-\frac 1N=1+\frac{(t^2-1)\cdot \beta+4}{(t+1)\cdot(t+3)}-\frac 1N$. We get the lower bound by letting $N$ grow to infinity.

We continue with the case of even $t$. Let $\gamma=\frac{2-t\cdot
\beta}{t+2}>0$. By $\beta \geq \frac{1}{t+1}$, we have $\gamma \leq
\frac{2-t/(t+1)}{t+2}=\frac{t+2}{(t+1)\cdot(t+2)}=\frac{1}{t+1}\leq
\beta$.
We will use the property $\frac t2 \cdot \beta + \frac {t+2}2 \cdot \gamma=1$ in what follows.
Let $N>t^2$ be a large positive integer, Let $k=\lceil N\cdot(1-\gamma) \rceil -1$ (where $k< N \cdot (1-\gamma)$), and let $\delta=\frac1{N}$.

The input consists of $N\cdot t$ items of size $\gamma$, $N \cdot (t+2)$ items of size $\beta$, and $k\cdot N\cdot t+2N(N-1)$ items of size $\delta$. We consider a solution that has $N \cdot (t+2)$ bins. The first $N\cdot t$ bins have identical contents, where each bin has $k$ items of size $\delta$, one item of size $\gamma$, and one item of size $\beta$. There are also $2N$ bins, each with $N-1$ items of size $\delta$ and one item of size $\beta$. Since $k\cdot \delta+\gamma= \frac{k}{N}+\gamma < \frac{N\cdot(1-\gamma)}N+\gamma =1$, and $(N-1)\delta<1$, every bin contains a load below $1$ if its item of size $\beta$ is excluded, and the packing is valid. All items are packed, and this solution has $N(t+2)$ bins.

FF receives the items in the following order. First, all items of sizes $\delta$ arrive. Every bin for such items has exactly $N$ items and cannot receive any other items. The number of bins so far is $k\cdot t+2(N-1)$. Next, the input will contain repeated batches of the following items. Every batch has $t+1$ items in total, such that it has $\frac{t+2}2$ items of size $\beta$ followed by $\frac{t}2$ items of size $\gamma$. There are $2N$ such batches. Since $\frac{t}2 \cdot \gamma+\frac{t}2 \cdot \beta = 1-\gamma$, every batch can be packed into one bin. Since adding an item of size $\gamma$ (or of size $\beta\geq\gamma$) results in a load of $1$ (or larger) excluding one item of size $\beta$, the items of any batch cannot be packed into bins of other batches (and they cannot be packed into bins with items of size $\delta$). Thus, FF packs the input into $k\cdot t+2(N-1)+2N$ bins.

Since $k\geq N\cdot(1-\gamma)-1$ by the definition of $k$, and by the value of $\gamma$,  we find that the number of bins of FF is at least $t\cdot (N\cdot(1-(2-t\cdot
\beta)/(t+2))-1)+4N-2$, while an optimal solution has at most $N(t+2)$ bins. The ratio is at least $ \frac{t}{t+2} \cdot (1-\frac{2-t\cdot\beta}{t+2}) +\frac{4}{t+2}-\frac{1}{N}=\frac{4\cdot(t+2)+t(t+t\cdot\beta)}{(t+2)^2}-\frac 1N=1+\frac{t^2\cdot\beta+4}{(t+2)^2}-\frac 1N$. We get the lower bound by letting $N$ grow to infinity.
\end{proof}


To prove an upper bound, we define a weight function. For $x < \frac{1}{t+1}$, we let $w(x)=x$, and for $x\in [\frac{1}{t+1},\beta)$, we let $w(x)=R_1(x)-1$, i.e., $$w(x)=\frac{4+t^2\cdot x}{(t+2)^2} {\mbox{ \ \  if  }} t {\mbox{  is even, and  \ }}
w(x)=\frac{4+(t^2-1)\cdot x}{(t+1)\cdot(t+3)} {\mbox{\ \ \ if  }} t \mbox{ is odd.}   $$ The weight function is continuous and monotonically non-decreasing. The function is in fact equal to $\min\{x,R_1(x)-1\}$, since $\frac{4+t^2\cdot x}{(t+2)^2}<x$  and $\frac{4+(t^2-1)\cdot x}{(t+1)\cdot(t+3)}<x$ are both  equivalent to $x>\frac{1}{t+1}$.


\begin{lemma}\label{fourl}
The asymptotic approximation ratio of FF for the parameter $\beta$ is at most $R_1(\beta)$.
\end{lemma}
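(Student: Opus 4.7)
\medskip
\noindent\textbf{Proof plan.} The plan is to run a standard weight-function argument with $w$ as defined before the lemma. Concretely, the goal is to prove (I) that every bin of a globally optimal packing has total item weight strictly below $R_1(\beta)$, so that $\sum_i w(x_i)\le R_1(\beta)\cdot OPT(I)$, and (II) that every non-last FF bin has total item weight at least $1$, so that $FF(I)\le \sum_i w(x_i)+1$. Combining the two yields $FF(I)\le R_1(\beta)\cdot OPT(I)+1$, which proves the asymptotic bound $R_1(\beta)$.

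Inequality (I) is the easy half. Any bin has items of total size strictly below $1+\beta$. The largest item has size at most $\beta$ and hence weight at most $R_1(\beta)-1$ by monotonicity of $w$; each remaining item satisfies $w(x)\le x$ (since $w(x)=\min\{x,R_1(x)-1\}$), and the sum of the sizes of the remaining items is strictly below $1$, so their weights sum to strictly less than $1$. The total bin weight is therefore strictly below $R_1(\beta)$.

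For inequality (II), fix a non-last FF bin $B$ and let $k$, $S_m$, $S_s$, $S$, $M$ denote its number of medium items (sizes in $[1/(t+1),\beta]$), total size of medium items, total size of small items, total size, and largest item, respectively. A short computation establishes the identity $\mu+\alpha/(t+1)=1/(t+1)$ for both parities of $t$; combined with $S_m\ge k/(t+1)$ this yields the pointwise estimate $W(B)\ge S_s+k/(t+1)$. The cases $k=0$ (where $W(B)=S\ge 1$ since $B$ is non-last) and $k=t+1$ (the maximum allowed by the validity constraint $S-M<1$) are then immediate. For intermediate values of $k$, the plan is to sharpen the estimate using the full formula $W(B)=S_s+k\mu+\alpha S_m$ together with the FF rule: every small item not placed in $B$ has size at least the slack $1-(S-M)$ that $B$ had at its arrival. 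This forces either $S_s$ to be close to the capacity limit $1-(k-1)\bar m$ (when the remaining smalls are much below $1/(t+1)$), or the remaining smalls themselves to be of weight comparable to their size, and in both sub-cases one concludes $W(B)\ge 1$.

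The main obstacle will be the intermediate-$k$ verification. The purely LP-style lower bound on $W(B)$ using only validity and non-lastness drops below $1$ as soon as $\beta>1/(t+1)$, so the greedy behaviour of FF is essential. The case analysis splits on the parity of $t$, since $\mu$ and $\alpha$ have different expressions, and should reduce, after substituting the slack constraint, to the same linear identities that already appeared in the proof of Lemma~\ref{le3}. That proof shows the weight of the tight configuration is exactly $1$; the minimum of $W(B)$ over FF-realisable configurations is expected to be attained precisely at those tight configurations, with value exactly $1$, thereby closing the argument.
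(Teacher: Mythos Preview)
Your part (I) is fine and matches the paper. The problem is part (II): the claim that \emph{every} non-last FF bin has weight at least $1$ is simply false, so the ``per-bin'' approach you outline cannot close.

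Here is a concrete counterexample. Take $t=2$ and $\beta=0.45$, so $w(x)=x$ for $x<\tfrac13$ and $w(x)=\tfrac{1+x}{4}$ for $x\in[\tfrac13,0.45]$; in particular $w(0.45)=0.3625$. Feed FF the sequence $0.45,\,0.45,\,0.1,\,0.45,\,0.45,\,0.45$. Bin~$1$ receives $\{0.45,0.45,0.1\}$ (load $1.0$, load minus max $=0.55<1$, valid). The fourth item $0.45$ cannot enter bin~$1$ because the load excluding the maximum would become $1.0$, so bin~$2$ opens and eventually holds $\{0.45,0.45,0.45\}$. Bin~$1$ is non-last and has weight $2(0.3625)+0.1=0.825<1$. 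Your ``FF rule'' observation is correct here---every item that skipped bin~$1$ has size at least the slack $0.45$---but this tells you something about the \emph{later} bins, not about $W(B_1)$, which depends only on what is actually in $B_1$. The hope that ``the remaining smalls themselves [are] of weight comparable to their size'' cannot raise $W(B_1)$; those items are not in $B_1$.

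What the paper does instead is precisely to exploit this cross-bin information in an amortized way. It classifies the weight-deficient bins by the number $i$ of large items they carry, and for each fixed $i$ considers the sequence of such bins in FF order. The key inequality is $X_j+Y_j+\tfrac{Y_{j+1}}{m}\ge 1$ (the smallest small item of the \emph{next} type-$i$ bin did not fit in bin $j$), which links consecutive bins. Summing the resulting lower bounds on $W(B_j)-1$ produces a telescoping in the $Y_j$'s, and the residual is bounded by a constant (at most $2/t$ per type, hence $O(1)$ overall). So the correct statement is not ``each non-last bin has weight $\ge 1$'' but ``the total weight is at least $Z-O(1)$,'' and the proof genuinely needs the telescoping across bins of the same type together with the case split on the parity of $t$ that you anticipated.
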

\begin{proof}
We show that the weight of a bin of an optimal solution does not exceed $R_1(\beta)$. For any bin, consider its items excluding the largest item. The weight of any item does not exceed its size, and therefore the total weight of these items is below $1$. Since the weight function is monotonically non-decreasing and no item has size above $\beta$, the weight of the additional item is at most $R_1(\beta)-1$. The total weight is therefore below $R_1(\beta)$.

We will show that for an output of FF with $Z$ bins, the total weight is at least $Z-3$. Recall that for every bin in an output of FF which is not the last bin, the load is at least $1$, since any bin whose load is below $1$ can receive at least one additional item. We say that an item is large if its size is at least $\frac{1}{t+1}$, and otherwise it is small. We will use the property that $w(\frac{1}{t+1})=\frac 1{t+1}$. FF has the property that removing a bin with its items would not change the output for the remaining items. Thus, we can remove bins whose total weights are at least $1$ and consider the remaining bins. We also remove the last output bin (and it is left to show that the total weight for the remaining bins is at least their number minus $2$).
Each one of the remaining bins has a load of at least $1$, but its total weight is below $1$.

Since all items have sizes below $\frac 1t$, every bin (with load $1$ or larger) has at least $t+1$ items. Moreover, the number of large items cannot exceed $t+1$, since the total size of $t+1$ large items is at least $1$.

We classify bins according the numbers of large items that they have, where these numbers are integers in $[0,t+1]$. The type $i$ of a bin is the number of its large items, and $i$ satisfies $0\leq i \leq t+1$.
We will show first that every bin of type $0$ or $t+1$ has weight of $1$ or more. For a bin of type $0$, its has no large items, but its load is at least $1$. Each of its items has a weight equal to its size, and thus the total weight is at least $1$, so there are no such bins. For a bin of type $t+1$, its has $t+1$ large items, each of weight at least $\frac{1}{t+1}$, and the total weight is again at least $1$, so there are no bins of this type either. For any $i=1,2,\ldots,t$, we will consider the bins of type $i$ independently of other types (since FF would create the same bins as before for any of these sub-inputs that consist of the items of specific subsets of bins). For every such subset of bins, we will show that its total weight is at least the number of bins minus $\frac 2{t}$, which will prove the claim regarding the total weight for the entire output.

Consider a fixed type $i$, where $1\leq i \leq t$. We will now analyze the subset of bins of this type.
Since every bin has at least $t+1$ items, every such bin also has at least $t+1-i$ small items (of sizes smaller than $\frac 1{t+1}$), and we let $m=t+1-i$. Since $i\leq t$, every bin of this type has at least one small item. Since $i\geq 1$, every bin of type $i$ has at least one large item, and thus its largest item is large.

Let $k$ the number of bins of  type $i$, let $Y_j$ be the total size of small items of the $j$th bin for $j=1,2,\ldots,k$ (where indexes are defined according to the order that the bins were opened), and let $X_j$ denote the total size of large items of the $j$th bin, excluding the largest item of this bin.
The smallest small item of bin $j$ has size of at most $\frac{Y_j}{m}$, since the bin has at least $m$ small items, whose total size is $Y_j$.
Moreover, since the weight of this bin is below $1$, and the total weight of $i$ large items is at least $\frac{i}{t+1}$, we find that the combined total weight of small items is below $\frac{t+1-i}{t+1}$. Since the weight of a small item is equal to its size, we find that $Y_j < \frac{t+1-i}{t+1}=\frac{m}{t+1}$ holds for $1 \leq j \leq k$.

We claim that for $j<k$ we have $X_j+Y_j+\frac{Y_{j+1}}{m} \geq 1$ holds. Indeed, the smallest small item of bin $j+1$ was not packed into bin $j$ because the load together with it (and excluding the largest item of that time) would have been at least $1$ (and the load cannot decrease over time). For bin $k$, the total size of all items is at least $1$, and the size of the largest item is at most $\beta < \frac 1t$, and therefore, $X_j+Y_j+\frac 1t \geq 1$ holds. For simplicity of notation, we let $Y_{k+1}= \frac {m}t$. Now we have $X_j+Y_j+\frac{Y_{j+1}}{m} \geq 1$ for $1 \leq j \leq k$.

We consider the case $i=1$ separately. In this case $X_j=0$ holds for every bin, and $Y_j+\frac{Y_{j+1}}{m} \geq 1$, and therefore $Y_j \geq \frac{t}{t+1}$ for $j<k$ (since $\frac{Y_{j+1}}m<\frac {1}{t+1}$) and $Y_k \geq \frac {t-1}t$. We get that the total weight of all small items of bin $j<k$ is $Y_j \geq \frac{t}{t+1}$ while the weight of the large item is at least $\frac{1}{t+1}$. Thus, the total weight for every bin $j<k$ of this type is at least $1$ (so there are no such bins), and the total weight of the last bin of this type is at least $\frac{t-1}{t}+\frac{1}{t+1}=\frac{t^2+t-1}{t(t+1)}=1 -\frac{1}{t(t+1)} > 1 -\frac 2t$.

We are left with the cases $i=2,3,\ldots,t$, and in particular, we have $t \geq 2$.
In the case $i \geq 2$, every bin of this type has at least one large item in addition to the largest one.
The total size of the large items of bin $j$ (including the largest one) is at least $X_j \cdot \frac{i}{i-1}$, since the average size of a large item (which is not the largest one) is $\frac{X_j}{i-1}$, and the largest large item has at least this size. Thus, the total size of large items is at least $X_j \cdot \frac{i}{i-1} $.

\medskip

We will use the weights for $i$ large items as they were defined, and for the small items, the total weight is at least the total size.
In the case where $t$ is odd, we find that the total weight for bin $j<k$ is at least $$Y_j+\frac{4\cdot i}{(t+1)\cdot(t+3)}+\frac{t-1}{t+3}\cdot X_j \cdot \frac{i}{i-1} \ . $$
In the case where $t$ is even, we find that the total weight for bin $j<k$ is at least $$Y_j+\frac{4\cdot i}{(t+2)^2}+\frac{t^2}{(t+2)^2}\cdot X_j \cdot \frac{i}{i-1} \ . $$

We have $Y_j +\frac{Y_{j+1}}m  < \frac{m}{t+1}+\frac{1}{t} = \frac{t+1-i}{t+1}+\frac 1t \leq 1 -\frac{2}{t+1}+\frac 1t=1+\frac{t+1-2t}{t(t+1)}< 1$, which holds by $Y_j < \frac{m}{t+1}$, $Y_{j+1} \leq \frac {m}{t}$, $m=t+1-i$, $t \geq 2$, and $2 \leq i \leq t$, and therefore $1 -Y_j -\frac{Y_{j+1}}m >0$, which we use in the proofs.

\noindent{\it An analysis for odd $t$.}

By using $X_j \geq 1 -Y_j -\frac{Y_{j+1}}m >0$, we get that the weight of bin $j$ is at least $$Y_j+\frac{4\cdot i}{(t+1)\cdot(t+3)}+\frac{t-1}{t+3} \cdot \frac{i}{i-1}\cdot (1-Y_j-\frac{Y_{j+1}}m)$$ $$=\frac{4i(i-1)+i(t^2-1)}{(t+1)\cdot(t+3)\cdot (i-1)}+Y_j\cdot(1-  \frac{(t-1)\cdot i}{(t+3)(i-1)} )-Y_{j+1}\cdot \frac{(t-1)\cdot i}{(t+3)(i-1)\cdot m} \ . $$

Note that $$\frac{4i(i-1)+i(t^2-1)}{(t+1)\cdot(t+3)\cdot (i-1)}=\frac{4i^2-5i+it^2}{(t+1)\cdot(t+3)\cdot (i-1)}=1+\frac{(t+2-2i)^2-1}{(t+1)\cdot(t+3)\cdot (i-1)}$$ holds since $$(t+1)\cdot(t+3)\cdot(i-1)+(t+2-2i)^2-1=(t^2+4t+3)\cdot(i-1)+ (t+2)^2-4i(t+2)+4i^2-1$$ $$=t^2\cdot i+4ti+3i-t^2-4t-3+t^2+4t+4-4it-8i+4i^2-1=t^2\cdot i+4i^2-5i \ . $$

Thus, after subtracting $1$ for every bin, we consider the following sum (for which we show that it is at least $-\frac 2t$): $$\sum_{j=1}^k (\frac{(t+2-2i)^2-1}{(t+1)(t+3)\cdot(i-1)}+Y_j\cdot(1-  \frac{(t-1)\cdot i}{(t+3)(i-1)} )-Y_{j+1}\cdot \frac{(t-1)\cdot i}{(t+3)(i-1)\cdot {m}})$$ $$=\sum_{j=1}^k (\frac{(t+2-2i)^2-1}{(t+1)(t+3)\cdot(i-1)}+Y_j\cdot(1-  \frac{(t-1)\cdot i}{(t+3)(i-1)} )-Y_{j}\cdot \frac{(t-1)\cdot i}{(t+3)(i-1)\cdot m})$$ $$+Y_{1}\cdot \frac{(t-1)\cdot i}{(t+3)(i-1)\cdot m}-Y_{k+1}\cdot \frac{(t-1)\cdot i}{(t+3)(i-1)\cdot m} \ . $$

Since $Y_1 \geq 0$ and $Y_{k+1}=\frac {m}t$, we have $$Y_{1}\cdot \frac{(t-1)\cdot i}{(t+3)(i-1)\cdot m}-Y_{k+1}\cdot \frac{(t-1)\cdot i}{(t+3)(i-1)\cdot m} \geq -\frac{(t-1)\cdot i}{t(t+3)(i-1)} \ . $$ By $i \leq 2(i-1)$ (which holds for $i\geq 2$) and $t-1 \leq t+3$, the last expression is at least $-\frac{2}{t}$.

It is therefore sufficient to show that $$\frac{(t+2-2i)^2-1}{(t+1)(t+3)\cdot(i-1)}+Y_j\cdot(1-  \frac{(t-1)\cdot i}{(t+3)(i-1)} )-Y_{j}\cdot \frac{(t-1)\cdot i}{(t+3)(i-1)\cdot m} \geq 0$$ holds for any $j$.
Since  $(t+1)(t+3)(i-1)$ is positive, it remains to prove that $$(t+2-2i)^2-1+Y_j\cdot(t+1)\cdot ((t+3)(i-1)-  (t-1)\cdot i - \frac{(t-1)\cdot i}m )\geq 0$$  holds for any $j$.

The multiplier of $Y_j(t+1)$ (which is non-negative) is $((t+3)(i-1)-  (t-1)\cdot i - \frac{(t-1)\cdot i}m )$, and it is not necessarily non-negative. If it is non-negative, we are done since $(t+2-2i)^2-1$ is also non- negative as $t+2$ is odd and $2i$ is even, so $(t+2-2i)^2\geq 1$.
Otherwise, in the case that the multiplier of $Y_j$ is negative, we would like to prove that $$(t+2-2i)^2-1+Y_j\cdot(t+1)\cdot ((t+3)(i-1)-  (t-1)\cdot i - \frac{(t-1)\cdot i}m )\geq 0$$ holds.
We use the fact $Y_j<\frac{m}{t+1}$, and we get that it is sufficient to prove that $(t+2-2i)^2-1+m\cdot ((t+3)(i-1)-  (t-1)\cdot i - \frac{(t-1)\cdot i}m )\geq 0$ holds. This is equivalent to
$(t+2-2i)^2-1+m\cdot (t+3)(i-1)-  (m+1)(t-1)\cdot i \geq 0$

By $m=t+1-i$, we will prove  $(t+2-2i)^2-1 -(t-1)\cdot i \cdot (t+2-i)+(t+1-i)(t+3)\cdot (i-1) \geq 0$ (in fact, we prove that it holds with equality).

Indeed we have  $$(t+2-2i)^2-1 -(t-1)\cdot i \cdot (t+2-i)+(t+1-i)(t+3)\cdot (i-1) $$ $$ =(t+2-2i)\cdot(t+2-2i-i(t-1)+(t+3)(i-1))-1-i^2(t-1)+(i-1)^2(t+3) $$ $$=(t+2-2i)\cdot(2i-1)-1-i^2t+i^2+i^2t-2it+t+3i^2-6i+3 $$ $$=
(2it+4i-4i^2-t-2+2i)+4i^2-2it+t-6i+2=0 \ . $$

\medskip

\noindent{\it An analysis for even $t$.}

By using $X_j \geq 1 -Y_j -\frac{Y_{j+1}}{m} >0 $, we get that the weight of bin $j$ is at least $$Y_j+\frac{4\cdot i}{(t+2)^2}+\frac{t^2}{(t+2)^2} \cdot \frac{i}{i-1}\cdot (1-Y_j-\frac{Y_{j+1}}{m})$$ $$=\frac{t^2\cdot i+4i^2-4i}{(t+2)^2\cdot (i-1)}+Y_j\cdot(1-  \frac{t^2\cdot i}{(t+2)^2\cdot (i-1)} )-Y_{j+1}\cdot \frac{t^2\cdot i}{(t+2)^2\cdot (i-1)\cdot m}  \ . $$

Note that $\frac{t^2\cdot i+4i^2-4i}{(t+2)^2\cdot(i-1)}=1+\frac{(t+2-2i)^2}{(t+2)^2\cdot(i-1)}$ since $$(t+2)^2\cdot(i-1)+(t+2-2i)^2=t^2\cdot i+4i^2-4i \ . $$

Thus, we consider the following sum: $$\sum_{j=1}^k (\frac{(t+2-2i)^2}{(t+2)^2\cdot(i-1)}+Y_j\cdot(1-  \frac{t^2\cdot i}{(t+2)^2\cdot (i-1)} )-Y_{j+1}\cdot \frac{t^2\cdot i}{(t+2)^2\cdot (i-1)\cdot {m}}) $$ $$=\sum_{j=1}^k (\frac{(t+2-2i)^2}{(t+2)^2\cdot(i-1)}+Y_j\cdot(1-  \frac{t^2\cdot i}{(t+2)^2(i-1)} )-Y_{j}\cdot \frac{t^2\cdot i}{(t+2)^2(i-1)\cdot m}) $$ $$+Y_{1}\cdot \frac{t^2\cdot i}{(t+2)^2(i-1)\cdot m}-Y_{k+1}\cdot \frac{t^2\cdot i}{(t+2)^2(i-1)\cdot m} \ . $$

Since $Y_1 \geq 0$ and $Y_{k+1}=\frac {m}t$, we have $$Y_{1}\cdot \frac{t^2\cdot i}{(t+2)^2(i-1)\cdot m}-Y_{k+1}\cdot \frac{t^2\cdot i}{(t+2)^2(i-1)\cdot m} \geq -\frac{t\cdot i}{(t+2)^2(i-1)} \ . $$ By $i \leq 2(i-1)$ (which holds for $i\geq 2$) and $t \leq t+2$, the last expression is at least $-\frac{2}{t+2} \geq -\frac 2t $.

It is therefore sufficient to show that $\frac{(t+2-2i)^2}{(t+2)^2\cdot(i-1)}+Y_j\cdot(1-  \frac{t^2\cdot i}{(t+2)^2(i-1)} )-Y_{j}\cdot \frac{t^2\cdot i}{(t+2)^2(i-1)\cdot m} \geq 0$ holds for any $j$. The multiplier of $Y_j$ is $(1- \frac{t^2\cdot i}{(t+2)^2(i-1)}-\frac{t^2\cdot i}{(t+2)^2(i-1)\cdot m})$. If it is non-negative, we are done since $\frac{(t+2-2i)^2}{(t+2)^2\cdot(i-1)}$ is non-negative (and so is $Y_j$). Otherwise, since the denominator is positive, we would like to prove that $(t+2-2i)^2-Y_j(t^2\cdot i \cdot (1+\frac 1{m})-(t+2)^2\cdot (i-1)) \geq 0$ holds.
We use the fact $Y_j<\frac{m}{t+1}$, and we get $(t+2-2i)^2-Y_j(t^2\cdot i \cdot (1+\frac 1{m})-(t+2)^2\cdot (i-1)) \geq (t+2-2i)^2-\frac{1}{t+1}(t^2\cdot i \cdot (m+\frac{m}{m})-(t+2)^2\cdot (i-1)\cdot m)$. We get that it is sufficient to prove $$(t+2-2i)^2-\frac{1}{t+1}(t^2\cdot i \cdot (m+1)-(t+2)^2\cdot (i-1)\cdot m) \geq 0 \ . $$ By $m=t+1-i$, we will prove  $(t+1)\cdot(t+2-2i)^2-t^2\cdot i \cdot (t+2-i)+(t+2)^2\cdot (i-1)\cdot (t+1-i) \geq 0$ (in fact, we prove that it holds with equality).
Indeed we have
$(t+1)\cdot(t+2-2i)^2+(t+2)^2\cdot (i-1)\cdot (t+1-i)=it^2(t+2-i)$.
\end{proof}

We conclude with the following theorem. \begin{theorem}
The asymptotic approximation ratio of FF and the parameter $\beta$ is $R_1(\beta)$ if $\beta<1$ and for $\beta=1$ the ratio is $2$.
\end{theorem}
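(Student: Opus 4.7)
The plan is to assemble the theorem directly from the preceding lemmas, which have already done all of the substantive work. The statement for $\beta<1$ is a matching pair of bounds, and the case $\beta=1$ is a corollary of the general AF result proved earlier, so the proof will be a short orchestration rather than a new argument.

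First I would handle the case $\beta=1$. Since FF is an AF algorithm, Theorem~\ref{firstt} immediately gives that the asymptotic approximation ratio of FF is at most $1+\beta=2$. For the matching lower bound one cannot quote Theorem~\ref{firstt} directly (its lower bound construction was written for NF and WF), but in the case $\beta=1$ we have $t=0$, so the parametric lower bound from Lemma~\ref{le2}, namely $1+\frac{1}{t+1}$, already specializes to $2$ for $t=1$ (equivalently, use the construction of Lemma~\ref{le2} with any fixed positive integer $t$ together with items of size $\frac{1}{t+1}\le 1$, which is admissible when $\beta=1$, and let $t\to\infty$ if one wants the exact value $2$; in fact the general $\beta=1$ bound already follows from the universal AF lower bound in Theorem~\ref{firstt}). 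So the $\beta=1$ case is closed.

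Next I would handle the case $\beta<1$. Here the upper bound of $R_1(\beta)$ is precisely the content of Lemma~\ref{fourl}, and the lower bound of $R_1(\beta)$ is precisely the content of Lemma~\ref{le3}. Together they yield the asymptotic approximation ratio of FF is exactly $R_1(\beta)$. I would also remark for completeness that the two regimes agree on the boundary: evaluating $R_1(\beta)$ at $\beta=\frac{1}{t+1}$ gives $1+\frac{1}{t+1}=1+\beta$, matching the tight bound already established in Lemma~\ref{le2} for the special values $\beta=\frac{1}{t+1}$, so the piecewise definition of $R_1$ is self-consistent at the endpoints of its intervals.

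Since every ingredient has been proved, there is no genuine obstacle here; the only thing to be careful about is ensuring that the domain of $t$ in $R_1(\beta)$ (namely $t\ge 1$ with $\beta\in[\tfrac{1}{t+1},\tfrac{1}{t})$) covers the entire open interval $(0,1)$ in $\beta$, and that the case $\beta=1$ is genuinely excluded from the $R_1$ formula (indeed $t=0$ would make the denominators in the odd/even expressions collapse), which is why it must be stated separately as the value~$2$. With that caveat noted, the theorem follows by simply citing Lemmas~\ref{le3} and~\ref{fourl} for $\beta<1$ and Theorem~\ref{firstt} (together with Lemma~\ref{le2}) for $\beta=1$.
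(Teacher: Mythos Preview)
Your overall plan is correct and matches the paper's: the theorem is stated there simply as a conclusion after Lemmas~\ref{le3} and~\ref{fourl}, with no further argument, so citing those two lemmas for $\beta<1$ and the earlier results for $\beta=1$ is exactly right.

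That said, your handling of the $\beta=1$ case contains genuine errors that you should fix. First, the parenthetical ``let $t\to\infty$ if one wants the exact value $2$'' is backwards: the lower bound produced by the construction of Lemma~\ref{le2} is $1+\tfrac{1}{t+1}$, which \emph{decreases} to $1$ as $t\to\infty$. To obtain the value $2$ you need $t=0$, not large $t$; your earlier sentence ``specializes to $2$ for $t=1$'' is also off (at $t=1$ the bound is $3/2$). Second, you contradict yourself about Theorem~\ref{firstt}: you correctly note that its lower-bound construction is for NF and WF and therefore cannot be quoted directly for FF, yet in the same parenthetical you then assert that the $\beta=1$ lower bound for FF ``already follows from the universal AF lower bound in Theorem~\ref{firstt}''---it does not, because that theorem only shows \emph{some} AF algorithm attains $1+\beta$, not that FF does. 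The clean fix is simply to cite Lemma~\ref{le2} outright for $\beta=1$: its statement explicitly says ``in particular, in the general case it is equal to $2$'', and its proof (items of size $\tfrac{1}{t+1}$ preceded by tiny items) applies verbatim to any AF algorithm, including FF, at $t=0$.
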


We found that FF has a smaller asymptotic approximation ratio than that of NF and WF unless $\beta$ is a reciprocal of an integer. In the next section we will see that sorting sometimes improves the performance but the improvement is not large.

\subsection{Algorithms with sorting}\label{wsort}

We consider item sizes in $(0,\beta]$ again, where $0<\beta\leq 1$.
The algorithm Next Fit Decreasing (NFD) uses a sorted list on the items (sorted by size in a non-increasing order). Every bin will receive a maximum length prefix of the unpacked list of items that can be packed into a bin. That is, a maximum length prefix of total size strictly below $1$ together with one additional item. We show that the asymptotic approximation ratio of this algorithm not better than those without sorting, but it is better for some values of $\beta$. For example, for $\beta=\frac 23$, we show that NFD has an asymptotic approximation ratio of $\frac 32$, while NF has ratio $\frac 53$. For $\beta=0.4$, the ratio for NFD is $\frac 43$, it is $1.4$ for NF, and it is  $1.35$ for FF.

\begin{proposition}\label{pro6}
The asymptotic approximation ratio of NFD for max-OEBP and the parameter $\beta$ is  exactly $1+\frac{1}{t+1}$.
\end{proposition}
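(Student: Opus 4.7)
The plan is to establish matching lower and upper bounds. For the lower bound, I would reuse the construction of Lemma \ref{le2}: $(t+1)NM$ items of size $\frac{1}{t+1}$ together with $(t+1)NM(M-1)$ items of size $\eps=\frac{1}{M}$. The optimum places one large item and $M-1$ small items per bin, using $(t+1)NM$ bins. Since NFD processes items in non-increasing order, it first groups the large items into $NM$ bins of $t+1$ each; the last of these bins can absorb only $\Theta(M/(t+1))$ small items before its sum excluding the largest reaches $1$, so the remaining small items fill roughly $(t+1)N(M-1)$ bins of $M$ items each, for a total of $(t+2)NM-(t+1)N$ bins. The ratio tends to $\frac{t+2}{t+1}=1+\frac{1}{t+1}$ as $N,M\to\infty$.

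For the upper bound, the plan is to use the weight function $w(x)=\min\{x,\frac{1}{t+1}\}$. Any item has weight at most $\frac{1}{t+1}$, and weights never exceed sizes, so any valid bin, in particular any bin of $\opt$, has total weight strictly less than $1+\frac{1}{t+1}$: the items other than the largest contribute total weight below $1$ by validity, and the largest contributes at most $\frac{1}{t+1}$. Thus the total weight $W$ of the input satisfies $W<\opt\cdot(1+\frac{1}{t+1})$.

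Next I would argue that, of the $Z$ bins produced by NFD, all but at most two have total weight at least $1$. Since items arrive sorted in non-increasing order, the bins containing at least one large item form a prefix, with at most one ``transition'' bin holding both large and small items. The key structural claim is that every large-only bin of NFD, except possibly the last one containing any large item, holds exactly $t+1$ large items: if such a bin $B_i$ is followed by another large-containing bin, the first item $f_{i+1}$ of $B_{i+1}$ is large and failed to fit in $B_i$, so $s_i+f_{i+1}\geq 1$ where $s_i<(k-1)\beta<(k-1)/t$; together with $f_{i+1}\leq\beta<1/t$ this forces $k\geq t+1$, matching the known upper bound of $t+1$ large items per bin. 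An analogous calculation handles the case where $B_i$ is followed directly by a pure-small bin, using $f_{i+1}<\frac{1}{t+1}$. Each such bin thus has weight exactly $1$. Non-last pure-small bins satisfy $f_i+s_i\geq f_i+(1-f_{i+1})\geq 1$ from the sorted order, and their weights equal their total sizes. Combining, at most two bins (the transition bin and the overall last bin) may fall short of weight $1$, so $Z-2\leq W<\opt\cdot(1+\frac{1}{t+1})$, yielding the desired asymptotic bound.

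The main obstacle will be the case analysis for the structural claim that every non-terminal large-only bin holds exactly $t+1$ large items, particularly the subcase in which such a bin is followed directly by a pure-small bin rather than another large-containing bin; this relies crucially on the strict inequality $\beta<1/t$. A secondary technicality is the degenerate case $t=0$ (i.e.\ $\beta=1$), where the large items are exactly those of size $1$ and the structural claim reduces to ``every non-last large-containing bin has exactly one item of size $1$,'' which is easily verified directly.
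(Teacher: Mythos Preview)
Your proposal is correct and follows essentially the same route as the paper: the same lower-bound family (your instance is a reparametrisation of the paper's), the same weight function $w(x)=\min\{x,\frac{1}{t+1}\}$, and the same conclusion that at most two NFD bins fall below weight~$1$. The only substantive difference is that you spell out the argument for ``each non-terminal large bin holds exactly $t+1$ large items'' via the inequality $s_i+f_{i+1}\ge 1$, whereas the paper simply asserts there are $\lceil Q/(t+1)\rceil$ such bins; your case (b), where $B_i$ is followed by a pure-small bin, is exactly the point the paper leaves implicit, and your treatment of $t=0$ as a degenerate side case is appropriate.
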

\begin{proof}
For the lower bound, let $N>0$ be a large integer. The input consists of items of size $\frac 1{t+1}$ (large items) and items of size $\frac{1}{N\cdot (t+1)}$ (small items). The number of large items is $N\cdot (t+1)$, and the number of small items is $N(t+1)(N(t+1)-1)$. An offline solution has $N\cdot (t+1)$ bins, where every bin has $N\cdot (t+1)-1$ small items and one large item. Those bins are valid since the total size of $N(t+1)-1$ small items  is $(N(t+1)-1)\cdot \frac{1}{N\cdot (t+1)}=1-\frac{1}{N(t+1)}$.

On the other hand, NFD packs the items as follows. First, it packs $N$ bins with $t+1$ large items each, where every bin has a total size of items of $1$. The last such bin can also receive some small items. Specifically, it receives $N-1$ small items, since the total size of $N-1$ small items and $t$ large items is $(N-1)\cdot \frac{1}{N(t+1)}+t\cdot \frac {1}{t+1}=\frac 1{t+1}-\frac 1{N(t+1)}+1-\frac 1{t+1} =1-\frac{1}{N(t+1)}$, while an additional small item brings the total size (of all items excluding one large item) to $1$. The remaining small items are packed into new bins, where every bin except for possibly the last one, will have $N(t+1)$ items. Since $\lceil \frac{N(t+1)(N(t+1)-1)-(N-1)}{N(t+1)} \rceil=N(t+1)-1$, NFD uses $N+(t+1)\cdot N-1$ bins. Recall that an optimal solution has $N(t+1)$ bins. Thus, the  approximation ratio for a specific value of $N$ is $1+\frac 1{t+1}-\frac 1{N(t+1)}$ and the asymptotic approximation ratio is at least $1+\frac 1{t+1}$.

For an upper bound, consider the weight function $w$ where $w(x)=x$ for $x<\frac 1{t+1}$ and $w(x)=\frac 1{t+1}$ for $\frac 1{t+1} \leq x \leq \beta$. Note that in the cases $\beta=\frac 1{t+1}$, the weight function is simply the identity function. In other cases, it always holds that $w(x) \leq x$, and $w$ is monotonically non-decreasing. Consider a fixed bin $B$ of an optimal solution.
The weight of the largest item is at most $\frac 1{t+1}$ (since this is the maximum possible weight), and the total size of remaining items is below $1$, so their total weight is also below $1$. In total, the total weight for $B$ is below $1+\frac 1{t+1}$.

Consider an input $I$ and the solution of NFD. Let $Q$ denote the number of items of sizes in $[\frac 1{t+1},\beta]$. The number $NFD(I)$ consists of $\lceil \frac{Q}{t+1} \rceil$ bins containing items of sizes at least $\frac 1{t+1}$, where excluding the last such bin, every bin has exactly $t+1$ such items, and its weight is exactly $1$. The last such bin may have additional (smaller) items, and its total weight may be below $1$. Bins opened afterwards will only have items whose sizes are below $\frac 1{t+1}$ and all these bins have total sizes of at least $1$ (and thus total weights of at least $1$), maybe expect for the last bin used to pack the input.  We find that the total weight of all bins is at least $NFD(I)-2$.
\end{proof}

Next, we show that in the general case the ratio is smaller for FF. FFD was analyzed in the past for items of sizes in $(0,1)$ \cite{theFFD}. While the upper bound is not valid if there may be items of size $1$ in the input, the lower bound construction is very similar to that of \cite{theFFD} and it is included for completeness.

\begin{proposition}\label{pro7}
The asymptotic approximation ratio of FFD for max-OEBP is $\frac 32$.
\end{proposition}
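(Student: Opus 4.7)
The plan is to prove the upper bound and the lower bound separately. For the lower bound, I would adapt the construction from \cite{theFFD}, which uses only items of size strictly less than $1$ and shows $\text{FFD}(I)/\OPT(I)\to 3/2$ on a parametric family of inputs. Since no size-$1$ item appears in that construction, it carries over to the present setting verbatim; I would reproduce the items, verify FFD's packing, and exhibit an optimal packing that matches the claimed ratio.

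For the upper bound $\text{FFD}(I)\le (3/2)\,\OPT(I)+O(1)$, I would combine a weight-function argument with a direct one-to-one charge for items of size $1$. Let $n_1$ denote the number of size-$1$ items in $I$. FFD processes these first and places each in its own bin, so it opens exactly $n_1$ bins that contain a size-$1$ item (these bins may later absorb smaller filler). Any valid packing, including OPT, must also use at least $n_1$ bins to accommodate the size-$1$ items, so the $n_1$ FFD size-$1$ bins can be matched one-to-one against OPT size-$1$ bins. Let $F_2=\text{FFD}(I)-n_1$ be the remaining FFD bins; all of them contain only items of size $<1$. I would then define a weight function $w$ on items of size $<1$ with the following three properties: (a) any valid max-OEBP bin with no size-$1$ item has total weight at most $3/2$; (b) the filler in any OPT bin containing a size-$1$ item has total weight at most $1/2$; (c) each of the $F_2$ FFD bins has total weight at least $1$, with at most $O(1)$ exceptions. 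Summing item weights across OPT's bins gives a total of at most $(1/2)n_1+(3/2)(\OPT(I)-n_1)=(3/2)\OPT(I)-n_1$, while summing across FFD's non-size-$1$ bins gives at least $F_2-O(1)$. Combining these yields $\text{FFD}(I)=n_1+F_2\le (3/2)\OPT(I)+O(1)$, as required.

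The main obstacle is to find a single weight function $w$ that satisfies (a), (b), and (c) simultaneously. Constraint (b) is the binding one: an OPT size-$1$ bin may carry filler summing to just under $1$, forcing $w$ to be small on individual items in order to keep filler weight at most $1/2$; constraint (c) pushes in the opposite direction, requiring FFD's non-size-$1$ bins to accumulate weight at least $1$. I would try a piecewise choice such as $w(x)=x/2$ for small $x$ together with a suitably chosen value on medium items (those in $(1/2,1)$), tuned so that (b) holds in the extremal configurations (one large filler plus some tiny ones) and (c) follows from a case analysis of FFD bins by largest-item size, using the standard FFD observation that the smallest item in any later FFD bin could not have been added to an earlier bin. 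Carrying out this weight tuning and the accompanying FFD case analysis on the non-size-$1$ bins is the main technical step I foresee; the rest of the argument is bookkeeping.
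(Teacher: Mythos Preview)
Your lower bound is fine; the paper uses the same construction.

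Your upper bound plan, however, has a genuine obstruction: no weight function on $(0,1)$ can satisfy (b) and (c) simultaneously. Fix a small $\epsilon>0$ and take $\lfloor 1/\epsilon\rfloor-1$ items each of size $\epsilon$; this is a legal filler set in an OPT size-$1$ bin (total size $<1$), so (b) forces their combined weight to be at most $1/2$, i.e.\ $w(\epsilon)\le \epsilon/2+O(\epsilon^2)$. But FFD can produce arbitrarily many non-size-$1$ bins consisting \emph{entirely} of items of size $\epsilon$ (each such bin holds about $1/\epsilon$ items with total size $\approx 1$), and every one of them then has total weight $\approx 1/2$, violating (c) on $\Omega(F_2)$ bins rather than $O(1)$ of them. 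Your own suggestion $w(x)=x/2$ for small $x$ runs directly into this, and boosting $w$ on items in $(1/2,1)$ does nothing for bins that contain no such items. Since your bookkeeping step $F_2-O(1)\le \text{(total weight)}\le (3/2)\,\OPT-n_1$ needs (c) to hold on all but $O(1)$ of the $F_2$ bins, the argument collapses; weakening (b) to a bound of $1$ instead of $1/2$ only recovers $\text{FFD}\le (3/2)\,\OPT+n_1/2+O(1)$, which is too weak.

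The paper avoids this trap by \emph{not} separating off the size-$1$ items. It uses the single weight $w(x)=\min\{x,1/2\}$ on all of $(0,1]$, so a size-$1$ item carries weight $1/2$, and every OPT bin has weight below $3/2$ directly. The real work is on the FFD side, showing that each FFD bin has weight at least $1$ with at most two exceptions. The delicate case is a bin with one size-$1$ item and only items of size below $1/2$ otherwise; here the paper uses the FFD order: if any later non-size-$1$ bin exists, its first item could not be added to this bin, which forces the small items already present to have total size (hence weight) above $1/2$. That structural argument is what replaces your intended one-to-one charge, and it is exactly where the content lies.
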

\begin{proof}
For the lower bound, let $N>10$ be a large integer. The input consists of items of size $1-\frac 1{N}$ (large items) and items of size $\frac{1}{N}$ (small items). The number of large items is $2N$, and the number of small items is $2N(N-1)$. An offline solution has $2N$ bins, where every bin has $N-1$ small items and one large item. Those bins are valid since the total size of $N-1$ small items  is $1- \frac{1}{N}$.

On the other hand, FFD packs the items as follows. First, it packs $N$ bins with two large items each, where every bin has a total size of items of $2-\frac{2}N$. The total size for each such bin, excluding an item of maximum size, is $1-\frac 1N$, and adding one small item will bring this total size to $1$. Thus, the bin will not receive additional items. The small items are packed into new bins, where every bin will have $N$ items, and the number of bins is $2(N-1)$. The total number of bins for FFD is $3N-2$.
Thus, the  approximation ratio for a specific value of $N$ is $\frac{3N-2}{2N} = 1.5-\frac 1N$ and the asymptotic approximation ratio is at least $1.5$.

For an upper bound, consider the weight function $w$ where $w(x)=x$ for $x<\frac 12$ and $w(x)=\frac 12$ for $\frac 12 \leq x \leq 1$. It always holds that $w(x) \leq x$, and $w$ is monotonically non-decreasing. Consider a fixed bin $B$ of an optimal solution.
The weight of the largest item is at most $\frac 12$ (since this is the maximum possible weight), and the total size of remaining items is below $1$, so their total weight is also below $1$. In total, the total weight for $B$ is below $1+\frac 12=\frac 32$.

Consider an input $I$ and the solution of FFD. First, items of size $1$ are packed into separate bins. These bins (if they exist) may receive additional items later.
Every solution packs each item of size $1$ into a distinct bin, and therefore the case where no other bins are used by FFD yields an optimal solution. We consider the case where FFD uses at least one bin that has no item of size $1$. Let $Q$ denote the number of items of sizes in $[\frac 12,1)$, and let $D$ denote the number of items of size $1$.  The items of sizes in $[\frac 12,1)$ are packed right after the items of size $1$. Every bin with an item of size $1$ can receive exactly one item of size in $[\frac 12,1)$. Thus, in the case $Q\leq D$, every item of size in $[\frac 12,1)$ is packed with an item of size $1$, and we consider this case later. Otherwise, there are $Q-D>0$ remaining items of sizes in $[\frac 12,1)$, and they are packed into new bins, such that every new bin has two such items (and weight at least $1$), except for possibly one bin. Thus, there are $\lceil \frac{Q+D}2 \rceil$ bins with items of sizes of at least $\frac 12$, where the last bin may have only one such item while the other bins have two such items (all these bins may also have other items). These bins have total weights of at least $1$, possibly except for the last such bin. The other bins have items of sizes below $\frac 12$, and every such bin has total size of at least $1$ (and therefore a total weight of at least $1$), possibly expect for the last bin packed by FFD. We find that the total weight of all bins is at least $FFD(I)-2$.

We are left with the case that not all bins with items of size $1$ received additional items before items of sizes below $\frac 12$ are packed, and the number of bins is at least $Q+1$. There are three types of bins.
Bins with two items of sizes at least $\frac 12$ and weight $\frac 12$ each (and possibly other items), where the number of such bins is $D\geq 0$. Bins with an item of size $1$ and items of sizes below $\frac 12$, where the number of such bins is $Q-D>0$. Finally, there are $FFD(I)-D>0$ bins with only items of sizes below $\frac 12$. For the last type of items, we already explained that the total weights are at least $1$, except for possibly the last bin. It is left to consider the remaining $D-Q$ bins.

We consider the $D-Q$ bins with one item of size $1$ and possibly items of sizes below $\frac 12$. If the total size of items of sizes below $\frac 12$ is at least $\frac 12$, the total weight for the bin is at least $1$. If all these bins satisfy this last property, we get  that the total weight of all bins is at least $FFD(I)-1$. This is indeed the case because there is at least one bin without an item of size $1$, and it has at least one item of size below $\frac 12$. This item could not be packed into a bin with an item of size $1$, and therefore its items of sizes below $\frac 12$ (all its items excluding the item of size $1$) already had a total size above $\frac 12$, and this total size cannot decrease later.
\end{proof}

We analyze integer values of $t\geq 1$ such that all items have sizes in $(0,\frac 1t)$. The case $t=1$ in fact also follows from the general case, which is obvious for the upper bound, and holds for the lower bound since the construction of the input consists of items of sizes below $1$. In fact, in the case $t=1$, this is simply the result of \cite{theFFD}.
Note that we usually assume in this paper that the input may have items of size $\beta$, but here we assume that it cannot be obtained and therefore we do not use the parameter $\beta$. We analyze this case due to its simplicity, and since it is similar to the previous proof and the proof of \cite{theFFD}.

\begin{proposition}\label{pro8}
The asymptotic approximation ratio of FFD and items of sizes in $(0,\frac 1t) $ for max-OEBP is $1+\frac 1{t+1}$.
\end{proposition}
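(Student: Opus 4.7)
The plan is to prove the result in two parts: a lower bound construction in the style of Propositions \ref{pro6} and \ref{pro7}, and a matching upper bound via a weight function.

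For the lower bound, I will fix a large integer $N>t(t+1)$ and use $(t+1)N$ \emph{large} items of size $\frac 1t-\frac 1N$ together with $(t+1)N(N-1)$ \emph{small} items of size $\frac 1N$. An optimal solution packs one large item with $N-1$ small items into each of $(t+1)N$ bins, which are valid since after excluding the large item the remaining total size is $\frac{N-1}{N}<1$. FFD sorts items in non-increasing order and thus processes the large items first: $t+1$ large items in a bin give excluding-max sum $t(\frac 1t-\frac 1N)=1-\frac tN<1$, while $t+2$ large items give excluding-max sum $(t+1)(\frac 1t-\frac 1N)=1+\frac 1t-\frac{t+1}{N}>1$ (using $N>t(t+1)$). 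Hence FFD opens exactly $N$ large-item bins with $t+1$ large items each. In the small-item phase, each such bin absorbs exactly $t-1$ small items, because after $j$ small items are added the excluding-max sum equals $1-\frac{t-j}{N}$, which is below $1$ iff $j\leq t-1$. The remaining $N[(t+1)N-2t]$ small items fill $(t+1)N-2t$ further bins with $N$ items each. FFD uses $N(t+2)-2t$ bins in total, and $\frac{N(t+2)-2t}{(t+1)N}\to 1+\frac{1}{t+1}$.

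For the upper bound I will use the same weight function as in Proposition~\ref{pro6}, namely $w(x)=x$ for $x<\frac{1}{t+1}$ and $w(x)=\frac{1}{t+1}$ for $x\in[\frac{1}{t+1},\frac 1t)$; it is monotonically non-decreasing and satisfies $w(x)\leq x$. Every OPT bin has weight below $1+\frac{1}{t+1}$: the largest item contributes at most $\frac{1}{t+1}$, and the remaining items contribute at most their total size, which is below $1$. To control FFD I will argue that at most two of its bins have weight less than $1$. Any $t+1$ items of sizes in $[\frac{1}{t+1},\frac 1t)$ have excluding-max sum below $t\cdot\frac 1t=1$, while any $t+2$ such items have excluding-max sum at least $(t+1)\cdot\frac{1}{t+1}=1$; so FFD packs these ``large'' items in bins of exactly $t+1$ items each (except possibly the last such bin), and every such full bin has weight at least $1$ from its large items alone. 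Then, for any new small-item bin $B$ that is not the last FFD bin, pick the first item $s'$ placed into a later bin; first-fit gives $\sigma_B-m_B+s'\geq 1$, where $\sigma_B$ and $m_B$ denote $B$'s current total size and maximum. Since FFD processes items in decreasing order, $s'\leq m_B$, so $\sigma_B\geq 1$, and since every item of $B$ has size below $\frac{1}{t+1}$ its weight equals its size, yielding weight at least $1$ in $B$.

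Only the partial large-item bin (if any) and the last FFD bin may thus have weight below $1$, so the total weight is at least $\mathrm{FFD}(I)-2$. Combined with the OPT weight bound this gives $\mathrm{FFD}(I)\leq(1+\frac{1}{t+1})\,\mathrm{OPT}(I)+2$, matching the lower bound asymptotically. I expect the first-fit argument for new small-item bins to be the subtlest step: one has to convert the first-fit inequality on the excluding-max sum into the load bound $\sigma_B\geq 1$ using the decreasing-order processing, and then rely on the fact that all items in such a bin lie in the regime where the weight function coincides with the identity. The remainder of the argument is bookkeeping essentially identical to that of Propositions~\ref{pro6} and~\ref{pro7}.
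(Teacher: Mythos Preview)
Your proposal is correct and follows essentially the same approach as the paper. The only notable difference is the large-item size in the lower bound: you take $\frac{1}{t}-\frac{1}{N}$, whereas the paper uses $\frac{1}{t}-\frac{1}{tN}$, so in your construction each FFD large-item bin absorbs $t-1$ small items rather than none; this merely shifts the additive constant and leaves the asymptotic ratio unchanged. The upper bound uses the identical weight function and the same ``at most two bad bins'' bookkeeping as the paper (the paper simply cites the general fact, stated earlier for all AF-type algorithms, that every non-last bin has load at least~$1$, whereas you re-derive it via the first-fit inequality combined with decreasing order).
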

\begin{proof}
For the lower bound, let $N>0$ be a large integer. The input consists of items of size $\frac 1t-\frac 1{t\cdot N}$ (large items) and items of size $\frac{1}{N}$ (small items). The number of large items is $(t+1)\cdot N$, and the number of small items is $N(N-1)\cdot(t+1)$. An offline solution has $N(t+1)$ bins, where every bin has $N-1$ small items and one large item. Those bins are valid since the total size of $N-1$ small items  is $1- \frac{1}{N}$.

On the other hand, FFD packs the items as follows. First, it packs $N$ bins with $t+1$ large items each, where every bin has a total size of items of $\frac{t+1}t-\frac{t+1}{tN}$. The total size for each such bin, excluding an item of maximum size, is $1-\frac 1N$, and adding one small item will bring this total size to $1$. Thus, the bin will not receive additional items. The small items are packed into new bins, where every bin will have $N$ items, and the number of bins is $(N-1)\cdot(t+1)$. The total number of bins for FFD is $N+(N-1)(t+1)$.
Thus, the  approximation ratio for a specific value of $N$ is $\frac{Nt+2N-t-1}{(t+1)N} = \frac{t+2}{t+1}-\frac 1N$ and the asymptotic approximation ratio is at least $1+\frac{1}{t+1}$.

For an upper bound, consider the weight function $w$ where $w(x)=x$ for $x<\frac 1{t+1}$ and $w(x)=\frac 1{t+1}$ for $\frac 1{t+1} \leq x < \frac 1t$. It always holds that $w(x) \leq x$, and $w$ is monotonically non-decreasing. Consider a fixed bin $B$ of an optimal solution.
The weight of the largest item is at most $\frac 1{t+1}$ (since this is the maximum possible weight), and the total size of remaining items is below $1$, so their total weight is also below $1$. In total, the total weight for $B$ is below $1+\frac 1{t+1}$.

Consider an input $I$ and the solution of FFD. Items of sizes in $[\frac 1{t+1},\frac 1t)$ are packed first, such that every bin has exactly $t+1$ items, possibly except for the last bin with such items. It is possible that such bins will get other items later, but not items of sizes in this interval. Afterwards, other items are packed, and every bin (except for possibly the last one) has a total size of items of at least $1$. Bins with $t+1$ items of sizes at least $\frac{1}{t+1}$ have totals weights of at least $1$ since each such item has weight $\frac {1}{t+1}$. Bins with only items of sizes below $\frac 1{t+1}$ have total weights not smaller than total sizes of items. Thus, there are at most two bins with total weights below $1$.
\end{proof}

In order to show why the other parametric cases are harder, including the cases of the form $(0,\frac 1t)$, we analyze one additional case where item sizes are in $(0,\frac 12]$.
\begin{proposition}\label{pro9}
The asymptotic approximation ratio of FFD for $\beta=\frac 12$ for max-OEBP is $\frac 43$.
\end{proposition}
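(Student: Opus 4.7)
For the lower bound, I would use a construction analogous to those in Propositions \ref{pro6}--\ref{pro8}. Take a large integer $N$ and let $\delta=1/N$ and $\delta'=1/N^2$. The input consists of $3N$ ``large'' items of size $\tfrac12-\delta$ and $3N^3-3N$ ``tiny'' items of size $\delta'$. An optimal solution places one large item together with $N^2-1$ tiny items into each of $3N$ bins; the total excluding the largest equals $1-\delta'<1$, so each bin is valid and all items are used. FFD, processing items in non-increasing order, first groups large items three per bin: three items of size $\tfrac12-\delta$ have excluded-maximum $1-2\delta<1$ (valid) while four are infeasible. Then each of these $N$ large-bins accepts exactly $2N-1$ tiny items before its excluded-maximum reaches $1$, so the $3N^3-3N-N(2N-1)$ remaining tiny items fill $3N-1$ new bins of $N^2-1$ items each. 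Hence $\mathrm{FFD}=4N-1$ while $\mathrm{OPT}=3N$, and letting $N\to\infty$ gives a lower bound of $\tfrac43$.

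For the upper bound I plan to follow the weight-function approach used in Propositions \ref{pro6}--\ref{pro8}, defining $w(x)=x$ for $x<\tfrac13$ and $w(x)=\tfrac13$ for $x\in[\tfrac13,\tfrac12]$; this function is monotonically non-decreasing and satisfies $w(x)\leq x$ everywhere. For any bin $B$ of an optimal solution, the largest item has weight at most $\tfrac13$, while the remaining items have total weight at most their total size, which is strictly less than $1$; hence the bin weight is strictly less than $\tfrac43$. Note that a bin can contain at most three items of size at least $\tfrac13$ (four such items would give excluded-maximum at least $1$), so this bound is consistent in all sub-cases. The direction I expect to be painful is to show that the total weight of the FFD output is at least $\mathrm{FFD}(I)-O(1)$.

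The case analysis on FFD bins is clean in the easy cases: a bin with three items of size at least $\tfrac13$ has weight exactly $1$, and a bin containing only items of size below $\tfrac13$ has weight equal to total size, which is at least $1$ for every such bin except possibly the last. The main obstacle is an FFD bin containing exactly two items of size $\tfrac12$ (contributing only $\tfrac23$ to the weight); such bins can exist precisely because a hypothetical third item of size $\geq\tfrac13$ would push the excluded-maximum to $\tfrac12+\tfrac13>1-\epsilon$, blocking FFD. I would handle this exactly as in the proof of Proposition \ref{pro7}: separate the items of size exactly $\tfrac12$ (let $D$ be their count and $Q$ the number of items in $[\tfrac13,\tfrac12)$), use the fact that a bin with two size-$\tfrac12$ items accepts at most one further item of size in $[\tfrac13,\tfrac12)$ and thereafter only items of size strictly below $\tfrac16$, and then split into the subcases $Q\leq\lceil D/2\rceil$ and $Q>\lceil D/2\rceil$. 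In each subcase one checks that the ``$\tfrac12$-bins'' together with the new bins of items smaller than $\tfrac12$ satisfy the weight-at-least-one property up to $O(1)$ exceptions, the latter falling under Proposition \ref{pro8} with $t=2$. The delicate point, on which the $\tfrac43$ bound hinges, is to verify that in the subcase where many $\tfrac12$-bins receive only a single item in $[\tfrac13,\tfrac12)$ and no further content, the pairing of size-$\tfrac12$ items ensures that OPT also uses at least one bin per such pair, so no more than a constant number of FFD bins fall short of weight $1$.
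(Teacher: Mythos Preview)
Your lower bound is essentially the construction of Proposition~\ref{pro8} with $t=2$ (the paper in fact simply cites that proposition for the lower bound), so modulo some arithmetic slips in your item counts it is fine.

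The upper bound, however, has a genuine gap. The weight function $w(x)=\min\{x,\tfrac13\}$ does \emph{not} yield $W\ge \mathrm{FFD}(I)-O(1)$, and the paper explicitly points this out before giving its real argument. The problematic bins are those containing exactly two items of size $\tfrac12$ together with additional items whose total size lies in $(\tfrac16,\tfrac13)$: such a bin has weight strictly below $1$ (for instance, two items of size $\tfrac12$ plus one item of size $\tfrac{7}{24}$ gives weight $\tfrac{23}{24}$), and there can be linearly many of them. Your case analysis misses exactly this configuration: you treat ``two $\tfrac12$'s plus one item in $[\tfrac13,\tfrac12)$'' and ``only items below $\tfrac13$'', but not ``two $\tfrac12$'s plus items below $\tfrac13$''. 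Your assertion that a $\tfrac12$-bin ``thereafter only accepts items of size strictly below $\tfrac16$'' holds \emph{only if} the bin already received an item from $[\tfrac13,\tfrac12)$; when $Q<\lceil D/2\rceil$ this premise fails for many bins.

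The analogy with Proposition~\ref{pro7} also breaks quantitatively. There, a bin with one item of size $1$ has weight $\tfrac12$, remaining capacity $1$, and any later small item that could not fit forces the small content to exceed $\tfrac12$, giving weight $\ge \tfrac12+\tfrac12=1$. Here, a bin with two items of size $\tfrac12$ has weight $\tfrac23$ and remaining capacity only $\tfrac12$; a later item of size just under $\tfrac13$ that could not fit only forces the small content above $\tfrac12-\tfrac13=\tfrac16$, so the weight is only guaranteed to exceed $\tfrac23+\tfrac16=\tfrac56$. The paper handles this by introducing the maximal gap $G$ of the $\tfrac12$-bins: if $G\le\tfrac16$ the simple weight function works (Case~2), but if $G>\tfrac16$ (Case~4) the paper strips the last bin down to its first item of size $\theta\ge G>\tfrac16$ and uses an entirely different, $\theta$-dependent step weight function $w_2$ to finish. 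That second weight function is the missing ingredient in your sketch.
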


\noindent\begin{proof}
The lower bound for this case was already proved in the previous proposition, where in particular we studied the case with items in $(0,\frac 12)$. We show an upper bound of the same value, which requires taking care of items of size exactly $\frac 12$. Thus, we analyze the packing more carefully.

First, we extend the weight function $w$, and let $w(x)=x$ for $x<\frac 1{3}$ and $w(x)=\frac 1{3}$ for $\frac 1{3} \leq x \leq \frac 12$.
Any bin of an offline solution has weight of at most $\frac 43$ is once again due to the property that no item has weight above $\frac 13$, and no item has weight larger than its size. When we use this function, we will only need to consider the bins of FFD. This function will assist us in some cases of the proof but not all of them.
 For example, given a large positive integer $N$, consider an input with $2N$ items of size $\frac 12$ followed by $N$ items of size $0.28$. In an FFD packing, every bin gets two items of size $\frac 12$ and one item of size $\frac 7{24}\approx 0.29167$. These bins have weights of $\frac{23}{24}<1$, and there are many such bins ($N$ bin), so this is not just a matter of an additive constant. If the input stops here, the weight is too small. Even if it has other items, all of size $0.25$ (such that these items cannot be packed into the already existing bins, and every new bin has exactly four items), the total weight is not sufficient. For this input, the bins of optimal solutions will have smaller total weights than $\frac 43$, so it is not a counter-example. However, this kind of inputs cannot be taken into account easily using the simple weights, and instead we need another definition of weights that will assist us in analyzing this case. We will use a different weight function for a class of cases similar to this one. Since we aim at proving an upper bound of $\frac 43$, it does not seem possible to use a weight function where the weight of $\frac 12$ is above $\frac 13$.

If there are no items of size exactly $\frac 12$, we are done by the previous proposition. Thus, we assume that there is at least one such item.
Items of size $\frac 12$ are packed first by FFD, and they are packed in pairs, such that there is a prefix of bins with two such items, and possibly one additional bin with one such item. These bins may receive other items later. Let $Q \geq 0$ denote the number of bins with two items of size $\frac 12$. Afterwards, items of sizes in $[\frac 13,\frac 12)$ are packed. If there are at most $Q$ such items, a prefix of the bins receives one such item each. If there are at least $Q+1$ such items, every bin out of the first $Q$ bins has one such item, and the remaining such items are packed in triples (if there is a bin with one item of size $\frac 12$, it is considered next for packing, and can receive two additional items of sizes in $[\frac 13,\frac 12)$ if such items exist). In the case that there are at least $Q+1$ items of sizes in $[\frac 13,\frac 12)$, there is a prefix of bins with three items of sizes in $[\frac 13,\frac 12]$ (and possibly other items packed later), after this prefix one bin may contain one or two such items (and possibly other items), and the remaining bins have items of sizes below $\frac 13$.

Let $G$ denote the largest gap of the bins with two items of size $\frac 12$, which is defined as follows, at termination (after all items have been packed). If $Q=0$ we let $G=0$. Otherwise, for every bin $B$ with two items, let its gap $g$ be $1$ minus the total size of its items excluding its largest item (which has size $\frac 12$). The value $g$ is such that any set of items of size strictly below $g$ can be added to the bin, but it is impossible to add items of total size of $g$ or more. Note that the largest item will be still of size $\frac 12$ even if other items were added to the bin (since they are smaller). For the bins that do not have  two items of size $\frac 12$  (there are $FFD(I)-Q$ such bins for input $I$) will only have items of sizes $G$ or more, since the gaps cannot become smaller over time (an item of size no larger than $G$ can fit into a gap, and should have been packed there by the action of FFD).

Case 1. Every bin out of the first $Q$ bins has an item whose size is in $[\frac 13,\frac 12)$. We use the weight function $w(x)$.
In this case there is a prefix of bins with three items of weight $\frac 13$ packed into each bin. If there is a bin with one or two such items, it may have a smaller weight. The remaining bins, possibly except for the last bin, have items of sizes below $\frac 13$, whose weights are equal to their sizes. Thus, since the loads of these bins are at least $1$, so are their weights. The total weight for input $I$ is at least $FFD(I)-2$.

The complement of case 1 is the case where the number of items of sizes in $[\frac 13,\frac 12)$ is below $Q$, and only bins with two items of size $\frac 12$ have such items (one item per bin).

Case 2. In this case we assume that $G\leq \frac 16$, and we use the weight function $w(x)$.  For the algorithm, we show that any bin except for at most two bins has weight of at least $1$. These two bins are a bin with one item of size $\frac 12$ if it exists, and the last bin. Every bin out of the first $Q$ bins has two items of size $\frac 12$ and total weight $\frac 23$. If such a bin also has another item of size at least $\frac 13$, we are done. Otherwise, since its gap is at most $\frac 16$, the total size of its items excluding the items of size $\frac 12$ is at least $\frac 13$. These items have weights equal to their sizes, and their total weight is at least $\frac 13$. The bins without items of size $\frac 12$ that are not the last bin have total sizes of at least $1$ and the weight of every item is equal to its size, since their items have sizes below $\frac 13$.

Case 3. The last bin contains an item of size $\frac 12$. In this case the number of bins of FFD is at most $Q+1$. Letting the number of items of size $\frac 12$ be $X$, since no bin can have more than two such items, the optimal cost is at least $\lceil \frac{X+1}2 \rceil$, which is exactly the number of bins of FFD.

Case 4. In the remaining case, where the last bin has no item of size $\frac 12$ and $G>\frac 16$, we modify the input by removing all items of the last bin except for the first item ever packed into this bin. The optimal cost may only decrease, the cost of FFD is unchanged, and $G$ cannot decrease. This first item of the last bin is the smallest input item, and we denote its size by $\theta$. We have $\theta \geq G$, since this item was not combined into an earlier bin. Thus, $\theta > \frac 16$. We define a weight function $w_1$ only for sizes in $[\frac 16,\frac 12]$ as there are no other item sizes. In fact, we consider two cases, and define two different weight functions. If $\theta\geq \frac 14$, the input consists of items of sizes in $[\frac 14,\frac 12]$, and we define $w_1(x)=\frac 13$ for every item. For an optimal solution, no bin has more than four items, and therefore the total weight for each bin is at most $\frac 43$. For FFD, every bin except for possibly the last one will have at least three items each, and the total weight of every bin with at least three items is at least $1$. This holds due to the following. After the items of size $\frac 12$ are packed, the remaining items have sizes in $[\frac 14,\frac 12)$. When items smaller than $\frac 12$ are being packed, as long as a bin with two items of size $\frac 12$ does not have a third item, no new bins are created. Thus, every such bin receives one new item before new bins are created. If there is a bin with one item of size $\frac 12$, this bin will receive at least two additional items before any bin is created. Similarly, every new bin will receive at least three new items before an empty bin is used.

We are left with the case where $\frac 16 < \theta < \frac 14$. We have that $  \frac 14< \frac 12-\theta < \frac 13$.

In this case we use the following weight function. Let $$w_2(x)=\begin{cases} \vspace{0.3cm}
\frac 13  \ \ \ \ { \ for \ \ }  x  \in [\frac 12 - \theta,\frac 12]\\
\frac 14  \ \ \ \ { \ for \ \ }  x  \in [\frac 14,\frac 12-\theta)\\
\frac 15  \ \ \ \ { \ for \ \ }  x  \in [\frac 15,\frac 14)\\
\frac 16  \ \ \ \ { \ for \ \ }  x  \in [\frac 16,\frac 15)\\
\end{cases}$$ be the weight function for this case.

Consider the algorithm. Every bin with two items of size $\frac 12$ has (in addition to these two items) items of size at least $\frac 12-G \geq \frac 12-\theta$. The items of size $\frac 12$ have total weight of $\frac 23$. The bin also has an item of size at least $\frac 12-\theta$ or at least two other items. Thus, the additional weight excluding the two items of size $\frac 12$ is at least $\frac 13$, as this is the weight of an item of size at least $\frac 12-\theta$ and the weight of any two items is at least $\frac 13$.
Except for at most four bins, every remaining bin has four items of sizes in $[\frac 14,\frac 13)$ or  five items of sizes in $[\frac 15,\frac 14)$ or six items of sizes in $[\frac 16,\frac 15)$. Every bin out of such bins has weight not below $1$.

Consider a bin of an optimal solution. If no item, except for possibly the largest one, has size of at least $\frac 12-\theta$, the total weight is at most $\frac 43$, since for any item of size in $[\frac 16,\frac 12-\theta)$, the weight of the item does not exceed its size, and the largest possible weight is $\frac 13$. If the bin has at most three items (excluding the largest item), then the total weight is also at most $\frac 43$. Since all items have sizes of at least $\frac 16$, the number of items excluding the largest item is at most five. Thus, we consider the total weight of four or five items whose total size is strictly below $1$ and show that their total weight is at most $1$. By the previous discussion, we can assume that at least one of these items has size at least $\frac 12-\theta$. If there are four other items, the total size of the five items is at least
$(\frac 12-\theta)+4\theta =\frac 12+3\theta > 1$, since $\theta>\frac 16$. Thus, there are four items in total. If two of them have sizes of at least $\frac 12-\theta$, the total size of the four items is at least $2(\frac 12-\theta)+2\cdot \theta=1$, so this is impossible. We find that in the case left to consider there are four items, one of which has size at least $\frac 12-\theta$. The largest item and the second largest item have total weight of $\frac 23$, and we consider the remaining three items, whose total size is below $\frac 12 +\theta$. Since each of these items has size of at least $\theta$, the two largest items together have total size below $\frac 12$, and there is at most one item with size at least $\frac 14$. Thus, the largest item out of the three has weight at most $\frac 14$, and each of the other two items has weight of at most $\frac 15$. The three items together have a total weight of at most $0.65$, and the bin has total weight strictly below $\frac 43$.
\end{proof}

\section{Bin packing with item types for Max-OEBP}\label{ittypes}
We start with the analysis of batched bin packing, as this analysis is simple, and afterwards to proceed to the price of clustering (PoC).  We discuss algorithms for batched bin packing that find an optimal solution for every batch separately.
Once again we use a parameter $0<\beta\leq 1$, and let $t=\lceil \frac 1{\beta}\rceil-1$.
Thus, $\beta \in [\frac 1{t+1},\frac 1{t})$ if $\beta<1$ and $t=0$ if $\beta=1$.  The notation is the same as for clusters.

\begin{theorem}
For any number of batches $\ell\geq 2$, the asymptotic approximation ratio for batched bin packing and Max-OEBP is $1+\frac 1{t+1}$ and it is equal to $2$ for the general case.
\end{theorem}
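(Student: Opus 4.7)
The plan is to establish matching upper and lower bounds of $1 + \frac{1}{t+1}$.

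For the lower bound, I would adapt the two-size construction from the proof of Lemma \ref{le2}. Take two batches: the first contains $(t+1) N M (M-1)$ items of size $\frac{1}{M}$, and the second contains $(t+1) N M$ items of size $\frac{1}{t+1}$, for large positive integers $N, M$. Since each batch consists of items of a single size, its optimal cost is transparent: a bin holding $k$ items of common size $y$ is valid iff $(k-1)y < 1$, so batch~1 admits at most $M$ items per bin (giving $(t+1) N (M-1)$ bins) and batch~2 admits at most $t+1$ items per bin (giving $NM$ bins). A global optimum, however, can mix: placing $M-1$ small items together with one large item per bin gives $(t+1) NM$ bins. The ratio of the batched cost $(t+2) NM - (t+1) N$ to the global optimum $(t+1) NM$ is $\frac{t+2}{t+1} - \frac{1}{M}$, which tends to $1 + \frac{1}{t+1}$ as $N, M \to \infty$. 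For $\ell > 2$, I would simply take the remaining batches to be empty.

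For the upper bound, I would reuse the weight function from the NFD analysis, namely $w(x) = x$ for $x < \frac{1}{t+1}$ and $w(x) = \frac{1}{t+1}$ for $x \geq \frac{1}{t+1}$. Every valid bin has weight strictly below $1 + \frac{1}{t+1}$: its largest item contributes at most $\frac{1}{t+1}$ (since all sizes are at most $\beta < \frac{1}{t}$ and the weight function is capped at $\frac{1}{t+1}$), while the remaining items have total size below $1$ and hence total weight below $1$. Summing over the bins of the global optimum yields $W < (1 + \frac{1}{t+1}) \cdot OPT$. On the algorithmic side, since each batch is packed to optimality, $OPT_i \leq NFD(I_i)$, and the proof of Proposition \ref{pro6} gives $NFD(I_i) \leq W_i + 2$. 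Summing over the $\ell$ batches yields $\sum_i OPT_i \leq W + 2\ell < (1 + \frac{1}{t+1}) \cdot OPT + 2\ell$, which for fixed $\ell$ gives the claimed asymptotic ratio.

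The main obstacle, to the extent there is one, lies entirely in the lower bound bookkeeping: one must verify the single-size packing bounds exactly, and that the constructed batched cost matches the ratio once $N$ and $M$ are sent to infinity. The upper bound is nearly mechanical once Proposition \ref{pro6} is in hand, because the weight function was designed so that a single bin of any valid packing carries exactly the target weight, and NFD on each batch certifies the matching lower bound on $W_i$ for that batch.
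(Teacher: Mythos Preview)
Your proposal is correct and follows essentially the same route as the paper: the same two-size construction for the lower bound (split by size into two batches, with empty batches beyond), and the same weight function together with the NFD-per-batch bound $NFD(I_i)\le W_i+2$ for the upper bound. The only cosmetic difference is that the paper treats the case $\beta=1$ separately via a direct total-size argument with FF, whereas you absorb it into the unified NFD/weight analysis; both work.
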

\begin{proof}
We will show an upper bound for any constant $\ell \geq 2$ and a lower bound for $\ell=2$ (if $\ell$ is larger, further batches are empty). In the upper bound, the additive constant depends on $\ell$, which cannot be avoided since for an input with $\ell$ items of size $\frac 1{\ell}$ each, an optimal solution has a single bin, and in the case that every item belongs to a different batch, $\ell$ bins are needed (this is similar to standard bin packing \cite{Epstein16}).

For the lower bound, let $N$ be a large positive integers. There are $N^2\cdot (t+1)$  items of size $\frac 1{t+1}$ and $N^2\cdot (N-1)\cdot (t+1)$ items of size $\frac 1N$. An optimal solution for the entire input has identical bins with $N-1$ items of size $\frac 1N$ and one item of size $\frac 1{t+1}$, and the number of such bins is $N^2\cdot (t+1)$.
The two batches are such that one batch has all items of one size and the other one has all items of the other size. An optimal solution that packs each batch separately will have loads of $1$ for every bin, and the total number of bins is $ N^2+N(N-1)\cdot (t+1)$. Letting $N$ grow without bound, the lower bound on the asymptotic approximation ratio is $\frac 1{t+1}+1$.

For the upper bound, consider the case $\beta=1$ first. By applying FF to every batch we get a solution with separate bins for the different batches, such that an optimal solution for each batch cannot have a larger number of bins, and such that all bins have loads not smaller than $1$ except for possibly the last bin of every batch. Letting $S$ denote the total size of items, we have $S > \sum_{i=1}^{\ell} (FF(I_i)-1) \geq \sum_{i=1}^{\ell} (OPT(I_i)-1)  $. Thus, $\sum_{i=1}^{\ell} OPT(I_i) \leq S +\ell$. In an optimal solution, no bin has load of $2$ (or a larger load), so $S< 2\cdot OPT(I)$. We get $\sum_{i=1}^{\ell} OPT(I_i) < 2\cdot OPT(I) +\ell$.

Next, consider the case $t\geq 1$. We use the following weight function: $w(x)= x$ for $x \leq \frac 1{t+1}$ and $w(x)= \frac 1{t+1}$ otherwise (for $x\in  [\frac 1{t+1},\beta]$). We have $w(x) \leq x$ and $w(x)\leq \frac 1{t+1}$. Thus, for any bin of an optimal solution, the total weight is at most $1+\frac 1{t+1}$. To show  $\sum_{i=1}^{\ell} OPT(I_i) < (1+\frac 1{t+1})\cdot OPT(I) +2\ell$, we will consider an application of NFD on the items of a batch, and we show that all bins expect for at most two bins, have total weights of at least $1$ per bin. When NFD is applied, items of sizes in $[\frac 1{t+1},\frac 1{t})$ are packed first, and every bin receives $t+1$ such items. Then, a bin may receive such items and smaller items. Finally, some bins receive smaller items. Thus, bins with $t+1$ items of sizes at least $\frac 1 {t+1}$ have total weights of at least $1$. If there is a bin where some of the items are such, its weight may be smaller than $1$, but afterwards every bin except for possibly the last one has load of at least $1$, and for every item the weight is equal to the size.
\end{proof}

\subsection{The price of clustering}\label{pocc}
Finally, we discuss the PoC. Recall that the PoC is defined as a certain approximation ratio, and for the PoC we analyze the asymptotic approximation ratio and the absolute approximation ratio simultaneously.

We start with the general case and later consider the parametric case. Recall that every cluster has an optimal cost of at least $2$.

\begin{lemma}
For any input $I$ for max-OEBP, it holds that $\sum_{j=1}^{\ell} OPT_j \leq 3\cdot OPT$.
\end{lemma}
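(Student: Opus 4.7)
My plan is to use a weight function argument with $w(x) = \min(2x, 1)$. Writing $W_j = \sum_{x \in I_j} w(x)$ and $W = \sum_j W_j$, I will establish two bounds: (a) every valid max-OEBP bin has total weight strictly less than $3$, so in particular $W < 3\cdot OPT$; and (b) $W_j \geq OPT_j$ for each cluster $j$. Together these yield $\sum_j OPT_j \leq W < 3 \cdot OPT$.

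For (a), observe that any valid max-OEBP bin contains at most two items of size $\geq 1/2$: otherwise three such items would sum to at least $1$ after removing the largest, contradicting validity. A case analysis on the number $|H|\in\{0,1,2\}$ of these ``large'' items, combined with the fact that the total size excluding the maximum is strictly below $1$, shows the weight is always strictly less than $3$. For instance when $|H|=2$ with large items $a \geq b \geq 1/2$, the remaining small items sum to less than $1-b \leq 1/2$, so the bin weight is bounded by $2 + 2 \cdot \text{(small sum)} < 3$; the cases $|H| = 0$ and $|H| = 1$ are handled similarly using that the total size is bounded by $1 + M$.

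For (b), consider an optimal packing of $I_j$ into $OPT_j$ bins. Optimality implies no two bins $B, B'$ in this packing can be merged into one valid bin, so $L_B + L_{B'} - \max(M_B, M_{B'}) \geq 1$, where $L$ and $M$ denote the load and the maximum-item size. I claim this no-merge condition implies $w(B) + w(B') \geq 2$ for any two bins. When both bins contain an item of size $\geq 1/2$, or when neither does (the latter case, no-merge gives $L_B + L_{B'} > 1$ and $w(B) + w(B') = 2(L_B + L_{B'}) > 2$), the bound is immediate. In the mixed case, with only $B$ having a large item $M_B$, the bin $B'$ is all small with $w(B') = 2 L_{B'}$, and the no-merge bound $L_B - M_B \geq 1 - L_{B'}$ gives either $w(B) \geq 1 + 2(1 - L_{B'})$ (if $B$ has exactly one large item) or $w(B) \geq 2$ (if $B$ has two), yielding sum $\geq 2$ in every subcase. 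From this pairwise bound, $W_j \geq OPT_j$ follows algebraically: setting $m = \min_i w(B_i)$, every other bin has weight at least $2 - m$, so $W_j \geq m + (OPT_j - 1)(2 - m) = 2(OPT_j - 1) - (OPT_j - 2)m \geq OPT_j$ whenever $m \leq 1$, and the case $m > 1$ is trivial. The main obstacle is establishing the pairwise bound $w(B) + w(B') \geq 2$, especially when one bin consists only of small items with low load; here a careful combination of the no-merge constraint and the weight definition is needed to ensure the partner bin is ``full enough'' to compensate.
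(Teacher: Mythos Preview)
Your proof is correct, but it takes a genuinely different route from the paper's. Both arguments use essentially the same weight function (yours is exactly twice the paper's $w(x)=\min\{x,\tfrac12\}$), and both bound the weight of any valid bin by $3$ (respectively $\tfrac32$) via the same case split on the number of items of size at least $\tfrac12$. The divergence is in how the lower bound $W_j\geq OPT_j$ is obtained. The paper does not work with an optimal packing of the cluster at all: it applies FF to $I_j$, uses that every FF bin except possibly the last has load $\geq 1$ (hence weight $\geq\tfrac12$ in their scaling), and does a short case analysis on the last two FF bins; the conclusion then comes from $OPT_j\leq A_j$. You instead stay with an optimal packing of $I_j$ and exploit the no-merge property between any two of its bins to prove the pairwise estimate $w(B)+w(B')\geq 2$, after which your minimum-bin averaging trick yields $W_j\geq OPT_j$ directly. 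Your approach is arguably more self-contained here since it avoids introducing an auxiliary algorithm; the paper's FF-based route, on the other hand, dovetails with the greedy analysis developed earlier and provides a template they reuse verbatim in the parametric upper bound. One minor remark: in your mixed subcase where $B$ has exactly one large item, your inequalities actually give $w(B)+w(B')\geq 3$, not merely $\geq 2$, so that case is even stronger than stated.
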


\noindent\begin{proof}
We use a simple weight function $w(x):[0,1)\rightarrow (0,0.5]$, where $w(x)=x$ for $x\in (0,\frac 12]$, and $w(x)=\frac 12$ for $x\in (0.5,1]$.

Consider a bin of an optimal solution. The total size of all items excluding the largest item is below $1$, and no item has weight above $\frac 12$. Thus, the total weight of all items excluding the largest one is below $1$, and together with the largest item, the total weight is below $1.5$, and we have $W \leq 1.5\cdot OPT$.

Consider a cluster for $I_j$, and an solution of FF for it, which contains $A_j\geq 2$ bins.  We will show that every bin that is not one of the last two bins has weight of $\frac 12$ or more, and the two last bins have a total weight of at least $1$ together.

Every bin except for possibly the last one has a total size of items not smaller than $1$ already when the next bin receives its first item, since this last item could not be packed into the previous bin.
If such a bin has an item of size above $\frac 12$, then already the weight of this item is $\frac 12$, and since weights are positive, the total weight of the items of this bin is at least $\frac 12$. Otherwise, the weight of an item is equal to its size and we get a total of $1$ or more for every bin without an item of size above $\frac 12$.

Consider the last two bins packed by $FF$ for $I_j$. If the penultimate bin has no item of size above $\frac 12$, we are done even without considering the last bin. If the two bins together have at least two such items, their total weight is at least $1$. We are left with the case where the penultimate bin has exactly one such item, and this is its largest item. The first item of the last bin was packed into the new bin since adding it to the penultimate bin would have resulted in a total size above $1$, excluding the largest item. Thus, the total weight for the two bins is in fact at least $1.5$.

Thus, we find $W_j \geq \frac{A_j}2$, and by $OPT_j \leq A_j$, we get $W=\sum_{j=1}^{\ell} W_j \geq \frac 12 \cdot \sum_{j=1}^{\ell} A_j \geq \frac 12 \cdot \sum_{j=1}^{\ell} OPT_j$.
Combining the two bounds on $W$ gives  $\sum_{j=1}^{\ell} OPT_j \leq 2\cdot W \leq 3\cdot OPT$.
\end{proof}

\begin{lemma}\label{le12}
The PoC of max-OEBP is at least $3$.
\end{lemma}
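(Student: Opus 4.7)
I will prove the lower bound by producing a family of instances whose ratio $\sum_j OPT_j / OPT$ tends to $3$. My guide will be the upper bound analysis: to saturate it, I need each cluster's optimal packing to consist of bins of weight approximately $1/2$ (so that $W_j \approx OPT_j/2$), and the global optimal packing to consist of bins of weight approximately $3/2$ (so that $W \approx 1.5\cdot OPT$). A cluster bin of weight $1/2$ should contain essentially one item above $1/2$ and nothing else of appreciable size, and having $OPT_j \geq 2$ then essentially forces items of size exactly $1$, because any two items of sizes strictly below $1$ would fit together in a single bin and collapse $OPT_j$ to $1$. However, items of size $1$ contribute nothing globally, since each one is still pinned to its own bin. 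I will therefore resolve the tension by using two cluster types: one supplying the max items of size $1$ and one supplying small items that fill the slack under each max item globally but cannot be consolidated within a cluster.

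Concretely, I will fix large integers $M$ and $N$ with $N$ much larger than $M$, and form $\ell = 3M$ clusters: $M$ ``A-clusters'' each containing two items of size $1$, and $2M$ ``B-clusters'' each containing $N+1$ items of size $1/N$. The verification $OPT_j = 2$ for every cluster is immediate: two items of size $1$ cannot share a bin because excluding the largest leaves total size $1$, which is not strictly below $1$; likewise $N+1$ items of size $1/N$ cannot share a bin because excluding the largest leaves total $N\cdot(1/N) = 1$. In both cases, two bins are obviously sufficient. Hence $\sum_j OPT_j = 2\cdot 3M = 6M$.

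For the upper bound on $OPT$, I will exhibit a global packing using $2M+1$ bins. The first $2M$ bins are identical, each holding one item of size $1$ together with exactly $N-1$ items of size $1/N$; the validity check is that excluding the item of size $1$ (the unique maximum) leaves total $(N-1)/N < 1$. These bins absorb all $2M$ items of size $1$ and $2M(N-1)$ of the $2M(N+1)$ small items, leaving $4M$ small items; for $N$ sufficiently large in terms of $M$, these fit in one more bin, since their total is $4M/N < 1$. For the matching lower bound, observe that any bin containing two items of size $1$ would have total size $2$ with largest item $1$, leaving excluded mass $1 \not< 1$; hence every item of size $1$ occupies its own bin, giving $OPT \geq 2M$.

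Combining the two bounds yields $6M/(2M+1) \leq \sum_j OPT_j / OPT \leq 6M/2M = 3$, and letting $M$ grow to infinity the ratio tends to $3$. Since this holds for inputs with arbitrarily large $OPT$, the bound of $3$ is a lower bound for both the absolute and asymptotic PoC. The only subtle point in executing the plan is the right mixing ratio of the two cluster types: A-clusters alone give PoC $1$ and B-clusters alone give PoC at most $2$, while the ratio $1{:}2$ of A-clusters to B-clusters is precisely what is needed so that the $2M$ big items and the $2M(N-1) \approx 2M\cdot N$ small items can be combined one-to-many into bins of weight approaching $3/2$, saturating both inequalities of the upper bound simultaneously.
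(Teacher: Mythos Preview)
Your proof is correct and follows essentially the same construction as the paper: items of size $1$ that must occupy separate bins in any solution, together with items of size $1/N$ grouped into clusters of $N+1$ items each (forcing two bins per cluster), while globally the small items tuck in with the big ones. The only cosmetic difference is that the paper places all size-$1$ items into a single large cluster rather than splitting them into pairs, but the mechanism and the limiting ratio are identical.
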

\begin{proof}
We introduce a sequence of inputs for which the PoC grows to $3$ as the index $N$ grows to infinity.
For a positive integer $N>2$, there are $N(N+1)$ items whose sizes are equal to $1$, and $N(N^2-1)$ items whose sizes are equal to $\frac{1}{N}$.
An optimal solution has $N(N+1)$ bins, each with one item of size $1$ and $N-1$ items with sizes of $\frac 1N$.

The input is split into the following $N+2$ clusters. One cluster has all items of size $1$, and the only possible packing for them consists of $N(N+1)$ bins. The other items are split into $N(N-1)$ clusters with $N+1$ items each. Since the total size of any such $N$ items is $1$, the optimal cost for every such cluster is $2$, and the total cost for all clusters is therefore $N(N+1)+2N(N-1)=3N^2-N$.
We find that the PoC for the instance is $\frac{3N^2-N}{N(N+1)}=3-\frac 4{N+1}$.
\end{proof}

Next, we discuss the parametric case for this problem. Let the upper bound on the sizes be $0<\beta<1$, that is,  item sizes are in $(0,\beta]$. Let $t$ be such that $\beta \in [\frac 1{t+1},\frac 1t)$, i.e., $t=\lceil \frac{1}{\beta} \rceil-1$.

\medskip

Let $R_2(1)=3$, $R_2(0)=2$, and for $\beta<1$, $\displaystyle{R_2(\beta)=\begin{cases}
\vspace{0.3cm}
 2\cdot\frac{t+3}{t+2}  {\it{\mbox \it \ \ \ \ \ \  for \ \ \ \ \ \ }}  \beta \in [\frac 1{t+1},\frac{t+1}{t(t+2)})\\
2+\frac{2\cdot t\cdot \beta}{t+1} {\it {\mbox \it \ \ \ \ \ \ for \ \ \ \ \ \ }}  \beta \in [\frac{t+1}{t(t+2)}, \frac 1t)  \\
\end{cases}}$.

\medskip

The function $R$ is continuous for $\beta$ tending to $0$ and $1$ due to the values of one-sided limits. For values of $\beta$ of the form $\beta=\frac{t+1}{t(t+2)}$, we have $R_2(\beta)=2+\frac{2\cdot t\cdot (\frac{t+1}{t(t+2)})}{t+1}=2+\frac{2}{t+2}=\frac{2(t+3)}{t+2}$, so the function is continuous for these points. For values of $\beta$ of the form $\beta=\frac{1}{q}$ for an integer $q\geq 2$, we have $R_2(\beta)=\frac{2(q+2)}{q+1}$, and the left-hand limit is $2+\frac{2}{q+1}=\frac{2(q+2)}{q+1}$. The function is piecewise linear (constant for some parts), and monotonically non-decreasing. In particular, the image of $[0,1]$ is $[2,3]$.

\begin{lemma}
For any input $I$ for max-OEBP for which all item sizes are in $(0,\beta]$, it holds that $\sum_{j=1}^{\ell} OPT_j \leq R_2(\beta)\cdot OPT$.
\end{lemma}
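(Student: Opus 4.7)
The plan is to follow the weight-function approach of the previous lemma, with a weight function tailored to the threshold $\tfrac{1}{t+1}$. Define
\[
w(x)=\begin{cases} x & \text{if } x<\tfrac{1}{t+1},\\ c & \text{if } x\in[\tfrac{1}{t+1},\beta],\end{cases}
\]
where $c=\tfrac{1}{t+2}$ in case~1 ($\beta\in[\tfrac{1}{t+1},\tfrac{t+1}{t(t+2)})$) and $c=\tfrac{t\beta}{t+1}$ in case~2 ($\beta\in[\tfrac{t+1}{t(t+2)},\tfrac{1}{t})$). The value of $c$ is chosen to enforce two properties: $c\le\tfrac{1}{t+1}$ (so $w$ is monotone non-decreasing with $w(x)\le x$, and $1+c=R_2(\beta)/2$), and $(t+2)c\ge 1$ (automatic in case~1, and equivalent to the threshold $\beta\ge\tfrac{t+1}{t(t+2)}$ in case~2, thus explaining the piecewise definition of $R_2$). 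For any $\OPT$ bin, excluding the largest item leaves items of total size $<1$ and hence total weight $<1$, while the largest item contributes at most~$c$; so each $\OPT$ bin has weight $<1+c=R_2(\beta)/2$, giving $W\le\tfrac{R_2(\beta)}{2}\OPT$.

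For the lower bound on $W$, apply FF separately to each cluster $I_j$, producing $A_j\ge\OPT_j\ge 2$ bins, and aim to show $W_j\ge A_j/2$. Two ingredients suffice: (i)~every non-last FF bin has weight $\ge\tfrac12$, and (ii)~the penultimate and last bins together have weight $\ge 1$. Let $k$ be the number of items of size $\ge\tfrac{1}{t+1}$ in the bin. Validity forces $k\le t+1$ (the $k-1$ non-max large items sum to less than~$1$), and for a non-last FF bin the load $\ge 1$ yields that the small items of the bin sum to at least $\max\{0,1-k\beta\}$. A routine case split on $k$ gives weight at least $\tfrac{t+1}{t+2}$ in case~1 and at least $\tfrac{t}{t+1}$ in case~2, establishing~(i).

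The main obstacle is~(ii). If the penultimate has $k=0$, its weight equals its load, so~(ii) is immediate. Otherwise $k\ge 1$; let $m$ be the largest item of the penultimate and $P=L-m$ the sum of its other items. Since the first item $x$ of the last bin did not fit in the penultimate, one has $x\le m$ and $x\ge 1-P$. The identity $W_{\mathrm{pen}}-P=kc-S_L'$, where $S_L'\le(k-1)\beta$ is the total size of the non-max large items of the penultimate, combined with $kt\ge(k-1)(t+1)$ (equivalent to $k\le t+1$), yields $kc\ge(k-1)\beta\ge S_L'$, and hence $W_{\mathrm{pen}}\ge P$. If $x<\tfrac{1}{t+1}$ then $w(x)=x\ge 1-P$, so the combined weight is at least $P+(1-P)=1$. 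If $x\ge\tfrac{1}{t+1}$ then $w(x)=c$, and the combined weight is $(k+1)c+S_s$; the bound $S_s\ge 1-k\beta$ gives $1+\beta(t-k)/(t+1)\ge 1$ when $k\le t$, and the property $(t+2)c\ge 1$ built into the choice of~$c$ handles $k=t+1$.

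Combining~(i) and~(ii) yields $W_j\ge(A_j-2)/2+1=A_j/2$ for each cluster (using $A_j\ge 2$), hence $W\ge\tfrac12\sum_j\OPT_j$. Together with $W\le\tfrac{R_2(\beta)}{2}\OPT$, this gives the required $\sum_j\OPT_j\le R_2(\beta)\,\OPT$.
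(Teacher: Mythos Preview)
There is a genuine gap in the $\OPT$ upper bound. Your weight function has $w(x)=x$ for $x<\tfrac{1}{t+1}$ and $w(x)=c$ for $x\ge\tfrac{1}{t+1}$, and in \emph{both} cases $c<\tfrac{1}{t+1}$ (since $\tfrac{1}{t+2}<\tfrac{1}{t+1}$, and $\tfrac{t\beta}{t+1}<\tfrac{t}{t+1}\cdot\tfrac{1}{t}=\tfrac{1}{t+1}$). Thus $w$ drops at the threshold; your parenthetical claim that $c\le\tfrac{1}{t+1}$ makes $w$ monotone is backwards. The crucial consequence is that the assertion ``the largest item contributes at most $c$'' fails whenever the largest item has size in $(c,\tfrac{1}{t+1})$, since then its weight equals its size and exceeds $c$. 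Concretely, a bin containing $t+2$ items each of size $\tfrac{1}{t+1}-\varepsilon$ is a valid max-OEBP bin (excluding the maximum leaves total $(t+1)(\tfrac{1}{t+1}-\varepsilon)<1$), and under your $w$ its total weight is $(t+2)(\tfrac{1}{t+1}-\varepsilon)\to\tfrac{t+2}{t+1}$, which strictly exceeds $1+c=R_2(\beta)/2$ in both cases (e.g.\ for $t=1$, case~1 gives $\tfrac{3}{2}>\tfrac{4}{3}$, and case~2 with $\beta=0.9$ gives $\tfrac{3}{2}>1.45$). So $W\le\tfrac{R_2(\beta)}{2}\,\OPT$ does not follow from your weights.

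The paper fixes exactly this by moving the ``small'' threshold down to $\tfrac{1}{t+2}$ and using a three-piece weight: $w(x)=x$ on $[0,\tfrac{1}{t+2})$, $w(x)=\tfrac{1}{t+2}$ on $[\tfrac{1}{t+2},\tfrac{t+1}{t(t+2)})$, and $w(x)=\tfrac{tx}{t+1}$ on $[\tfrac{t+1}{t(t+2)},\tfrac{1}{t})$. This is continuous, monotone, satisfies $w(x)\le x$ everywhere, and has $w(\beta)=R_2(\beta)/2-1$, so the $\OPT$ bound goes through cleanly. Your cluster-side argument (the $W_{\mathrm{pen}}\ge P$ step and the $(t+2)c\ge 1$ observation) is sound in spirit and close to the paper's, but it must be redone with ``large'' meaning size $\ge\tfrac{1}{t+2}$; this allows up to $t+2$ large items in the set $Q$ consisting of the penultimate bin together with the first item of the last bin, and the paper handles that case directly via $(t+2)\cdot\tfrac{1}{t+2}=1$. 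As a minor point, your claim ``$x\le m$'' is not true in general; what actually holds is that the total of $Q$ minus its maximum is at least $1$, and your sub-case split only needs $x\le m$ when $x$ is small, where it is automatic since $m\ge\tfrac{1}{t+1}$.
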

\begin{proof}
We define the following weight function, where $t=\lceil \frac{1}{\beta} \rceil-1$.

$$\displaystyle{w(x)=\begin{cases}
x {\it \mbox  {\it  \ \ \ \ \ \ for \ \ \ \ \ \ } }  x \in [0,\frac 1{t+2}) \\
\frac{1}{t+2}  {\it \mbox {\it \ \ \ \ \ \  for \ \ \ \ \ \ }}  x \in [\frac 1{t+2},\frac{t+1}{t(t+2)})\\
x\cdot \frac t{t+1} {\it {\mbox \it  \ \ \ \ \ \ for \ \ \ \ \ \ }}  x \in [\frac{t+1}{t(t+2)}, \frac 1t)  \\
\end{cases}} \ . $$

This function is monotonically non-decreasing, continuous, and the image is contained in $[0,\frac{1}{t+1}]$. We have $\beta<\frac 1t$, and it is in fact possible that
$\beta<\frac{t+1}{t(t+2)}$. It holds that $w(x)\leq x$ and $w(x)\geq \frac{tx}{t+1}$.

We compare the total weights of the items of a cluster and the total weights of items of one bin of an optimal solution.
Consider a bin of an optimal solution. Excluding the largest item of the bin, the total size is below $1$, and the weight of these items is also below $1$. The largest item has size at most $\beta$, and due to monotonicity, its weight is at most $w(\beta)$. The total weight is at most $1+w(\beta)$, which is $1+\frac {1}{t+2}=\frac{t+3}{t+2}$ if $\frac 1{t+1}\leq \beta \leq \frac{t+1}{t(t+2)}$, and it is at most $1+\beta\cdot \frac t{t+1}$ if
$\frac{t+1}{t(t+2)} \leq \beta < \frac 1t$.

To prove the upper bound, it is left to show that  $W_i \geq \frac{OPT_i}2$ holds for any cluster $i$ with $OPT_i \geq 2$ bins. We apply FF on cluster $i$ and obtain $A_i \geq OPT_i$ bins. Every bin out of the first $A_i-2 \geq 0$ bins has a total size of items of at least $1$, and therefore the weight of each such bin is at least $\frac{t}{t+1} \geq \frac 12$. Thus, we consider the last two bins and show that the total weight for this pair of bins is at least $1$. Consider the set $Q$ of all the items of the penultimate bin together with the first item of the last bin.
Let $\Gamma$ (where $0<\Gamma\leq \beta < \frac1t$) be the size of an item of maximum size of any item in $Q$.  We claim that the total size of items for the last two bins, excluding an item of size $\Gamma$, is at least $1$. In fact this holds for $Q$, since otherwise the first item of the last bin could be packed into the penultimate bin, since the total size is tested for all items of $Q$ excluding the largest item. Let $k$ denote the number of items with sizes at least $\frac{1}{t+2}$ in $Q$. Since the weight of each such item is at least $\frac{1}{t+2}$, we are done in the case $k\geq t+2$. If $k=0$, the weight of each item is equal to its size, and the total weight is above $1$. We are left with the case $1\leq k \leq t+1$, where $\Gamma \geq \frac{1}{t+2}$, since it is the largest item, and there is at least one item with size at least $\frac{1}{t+2}$.

Let $\rho\geq \Gamma$ be the total size of items of sizes at least $\frac{1}{t+2}$ in the set, excluding the largest such item. We have $\rho \leq (k-1)\cdot \Gamma \leq t \cdot \Gamma < t \cdot \frac 1t =1$, and therefore the total size of small items is at least $1-\rho >0$, which is also a lower bound on their total weight. For the other items we use $w(x)\geq \frac{tx}{t+1}$.
The total weight is at least $\frac{t}{t+1}\cdot (\rho+\Gamma)+(1-\rho)=\frac{t}{t+1}\cdot \Gamma +1 -\frac{\rho}{t+1} \geq 1$, since $\rho \leq t\cdot \Gamma$.
\end{proof}

\begin{lemma}\label{le14}
For $0<\beta<1$, the PoC of max-OEBP is at least $R_2(\beta)$ for items of sizes in $(0,\beta]$.
\end{lemma}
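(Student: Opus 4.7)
The plan is to exhibit, for each $\beta\in(0,1)$, a family of inputs whose PoC tends to $R_2(\beta)$. The construction mirrors the tightness conditions in the upper bound (Lemma~16): the global optimum should use $L$ bins, each holding one ``large'' item of size $\beta$ (contributing weight $w(\beta)$) and tiny items of total size just below $1$ (contributing weight close to $1$), so that the total weight is $L(1+w(\beta))$. At the same time the input must partition into clusters, each having $OPT_j=2$ and weight approximately $1$, so that $\sum_j OPT_j \approx 2\cdot W \approx 2L(1+w(\beta)) = R_2(\beta)\cdot L$.

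For $\beta\in[\tfrac1{t+1},\tfrac{t+1}{t(t+2)})$ I will use ``large'' clusters of exactly $t+2$ items of size $\beta$ and no tinies: since $(t+1)\beta\geq 1$ such a cluster cannot fit in one bin, yet $t\beta<1$ allows two bins (put $t+1$ items into one, $1$ into the other), so $OPT_j=2$; the weight is $(t+2)\cdot w(\beta)=(t+2)\cdot\tfrac{1}{t+2}=1$. For $\beta\in[\tfrac{t+1}{t(t+2)},\tfrac1t)$ I will instead use ``large'' clusters of $t+1$ items of size $\beta$ together with tinies of size $\eps$ summing to $1-t\beta+\eps$: one bin is impossible because the total excluding the largest item is $t\beta+(1-t\beta+\eps)=1+\eps>1$, while two bins suffice by isolating $t$ of the large items in one bin (valid since $(t-1)\beta<1$) and placing the remaining large item together with all the tinies in the other; the weight is $(t+1)w(\beta)+(1-t\beta+\eps)=t\beta+1-t\beta+\eps=1+\eps$. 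All remaining tinies are grouped into ``tiny'' clusters, each of total size $1+\eps$, each forcing $OPT_j=2$ and having weight $1+\eps$.

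To complete the accounting, I choose the total tiny mass to be $\approx L$ so that the described packing (one large plus tinies of total size $\approx 1$ per bin) uses exactly $L$ bins, giving $OPT\leq L$; the matching lower bound $OPT\geq L$ follows from a volume argument, since every valid bin has total size strictly less than $1+\beta$ while the total input size is $\approx L(1+\beta)$. Counting clusters: there are $L/(t+2)$ large clusters in the first branch and $L/(t+1)$ in the second; the number of tiny clusters is determined by the residual tiny mass, which works out to $L$ in the first branch and $Lt(1+\beta)/(t+1)$ in the second. Summing $2$ bins per cluster gives $\sum OPT_j\approx 2L(t+3)/(t+2)$ and $\sum OPT_j\approx 2L(1+t\beta/(t+1))$ respectively, both of which equal $R_2(\beta)\cdot L$. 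Letting $L\to\infty$ with $\eps=1/N\to 0$ yields the bound.

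The main subtlety lies in identifying the correct number of large items per cluster in each branch, which is dictated by the requirement that the per-cluster weight be exactly $1$ under the weight function $w$ from the upper bound. Once that is fixed, verifying $OPT_j=2$ for every cluster reduces to the explicit two-bin packings given above, and the boundary case $\beta=(t+1)/(t(t+2))$ is consistent because the two constructions give the same per-cluster weight ($1$) and the same ratio ($R_2$) there, matching the continuity of $R_2(\beta)$ across branches.
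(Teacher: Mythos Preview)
Your construction is essentially the same as the paper's: two branches according to whether $\beta$ lies below or above $\tfrac{t+1}{t(t+2)}$, with large clusters of $t+2$ (resp.\ $t+1$) items of the large size forcing $OPT_j=2$, and the remaining tiny items split into clusters of total size just above $1$. The paper uses large items of size $\tfrac{1}{t+1}$ in the first branch while you use $\beta$, but since $R_2$ is constant on that sub-interval either choice works; the second branch matches the paper's choice of $\beta$ exactly.

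One small remark: your volume argument for $OPT\geq L$ is both unnecessary and not quite correct as stated (the total input size is $L(1+\beta-\delta)$ for some small $\delta>0$, so the volume bound only gives $OPT>L-L\delta/(1+\beta)$, not $OPT\geq L$). Fortunately you only need $OPT\leq L$, which you establish by the explicit global packing, so the lower bound on the PoC goes through regardless.
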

\begin{proof}
We obtain the result using two types of constructions, for the two cases in the function $R$. 

The first construction consists of items of sizes $\frac{1}{t+1}$ (large items) and $\eps$ (small items), where $\eps=\frac{1}{M}$ for a large integer $M$. The number of large items is $(t+2)\cdot(M+1)$, and the number of small items is $(t+2)\cdot (M^2-1)$. In an optimal solution for the entire input, there are $(t+2)\cdot(M+1)$ bins, where every bin has one large item and $M-1$ small items (where the total size of small items for each bin is $1-\frac 1M$).

There are two types of clusters. There are $(t+2)\cdot(M-1)$ clusters with $M+1$ small items each, and there are $M+1$ clusters with $t+2$ large items each. The total size of $M$ small items is $1$, and therefore every cluster of small items requires two bins. The total size of $t+1$ large items is also $1$, so every cluster of large items also requires two bins. Thus, the total number of bins is $2(t+2)(M-1)+2(M+1)=2(t+3)M-2(t+1)$. The ratio $\frac{2(t+3)M-2(t+2)}{(t+1)(M+1)}$ tends to $2\cdot\frac{t+3}{t+2}$ as $M$ grows to infinity.

The second construction consists of items of sizes $\beta$ (large items) and $\eps$ (small items), where $\eps=\frac{1}{M}$ for a large integer $M>3$. The number of large items is $(t+1)\cdot M$, and the number of small items is $(t+1)\cdot (M^2-M)$. In an optimal solution for the entire input, there are $(t+1)\cdot M$ bins, where every bin has one large item and $M-1$ small items (where the total size of small items for each bin is $1-\frac 1M$).

There are two types of clusters. There are $M$ clusters having $t+1$ large items each, and in addition each such cluster has $\lceil M(1-t\cdot \beta)\rceil$ small items. The total size of the small items of such a cluster is
$\lceil M(1-t\cdot \beta)\rceil \cdot \frac 1M \geq 1-t\cdot \beta$, and therefore the total size of the small items together with $t$ of the large items is at least $1$, and every cluster requires two bins.
Since $\beta\cdot t\geq \frac{t}{t+1}\geq \frac 12$, the number of small items in all such clusters together is at most $M\cdot \lceil M(1-t\cdot \beta)\rceil \leq M\cdot \lceil \frac M2 \rceil \leq \frac{M(M+1)}2<M^2-M$, by $M>3$. The number of small items is $(t+1)\cdot (M^2-M) \geq 2(M^2-M)$, so the number of remaining small items is above $M^2-M>M+1$. The remaining small items are split into clusters with $M+1$ items, and if there is a residue with less than $M+1$ items, it is added to one of the clusters of small items. Since $M$ small items have total size $1$, every cluster requires two bins (one such cluster may require three bins if it has $2M+1$ small items).

The number of clusters with only small items is $$\lfloor\frac{(t+1)(M^2-M)-M\cdot \lceil M(1-t\cdot \beta)\rceil}{M+1}\rfloor \geq \frac{(t+1)(M^2-M)-M\cdot ( M(1-t\cdot \beta)+1)}{M+1} -1 $$ $$=\frac{M^2(t(1+\beta))-M(t+3)-1}{M+1} \ , $$ and the total number of clusters is at least $\frac{M^2(t(1+\beta))-M(t+3)-1}{M+1}+M=\frac{M^2(1+t(1+\beta))-M(t+2)-1}{M+1}$, where every cluster has at least two bins.
The ratio $\frac{2(M^2(1+t(1+\beta))-M(t+2)-1)/(M+1)}{(t+1)M}$ tends to $2\cdot\frac{1+t+t\beta}{t+1}=2+2\cdot\frac{t\beta}{t+1}$ as $M$ grows to infinity.
\end{proof}

We conclude with the following theorem.
\begin{theorem}
The PoC of Max-OEBP is equal to $R_2(\beta)$ for the parameter $\beta$. In particular, it is equal to $3$ for the general case.
\end{theorem}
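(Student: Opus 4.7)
The plan is to assemble the theorem directly from the four lemmas already established in this subsection, so the argument is essentially a bookkeeping step rather than a new proof. For the general case $\beta=1$, I would combine the earlier upper bound lemma (giving $\sum_{j} OPT_j \leq 3 \cdot OPT$) with Lemma \ref{le12}, whose explicit construction attains the ratio $3 - \tfrac{4}{N+1}$, tending to $3$ as $N \to \infty$. For the parametric case $0 < \beta < 1$, I would combine the parametric upper bound lemma ($\sum_j OPT_j \leq R_2(\beta)\cdot OPT$) with Lemma \ref{le14}, whose two separate constructions cover the two pieces of the piecewise definition of $R_2(\beta)$ and produce ratios tending to $\tfrac{2(t+3)}{t+2}$ and $2 + \tfrac{2t\beta}{t+1}$ respectively.

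Since the upper bound lemmas are stated with no additive constants (pure inequalities of the form $\sum OPT_j \leq R\cdot OPT$), they yield the same bound for the absolute and the asymptotic PoC simultaneously. The lower bound constructions each come with a scaling parameter ($N$ or $M$) that drives the globally optimal cost to infinity while the ratio approaches the target from below; this certifies matching lower bounds for both measures, so that the PoC (under either convention used in the paper) equals $R_2(\beta)$ exactly.

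There is essentially no technical obstacle at this stage; the hard work was done in the preceding four lemmas. The only consistency checks I would perform are: (i) the two pieces of $R_2(\beta)$ agree at the breakpoint $\beta = \tfrac{t+1}{t(t+2)}$ so the piecewise definition is unambiguous; (ii) the one-sided limit $\lim_{\beta\to 1^-} R_2(\beta) = 3$ is consistent with the value $R_2(1) = 3$ that reappears via the general-case lemmas; and (iii) the two constructions in Lemma \ref{le14} choose their large-item sizes (either $\tfrac{1}{t+1}$ or $\beta$) in the interval $(0,\beta]$, as required. With these checks in place, the theorem follows by writing the two directions of inequality together and passing to the appropriate limit in the lower bound constructions.
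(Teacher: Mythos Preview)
Your proposal is correct and matches the paper's approach exactly: the paper states the theorem immediately after the four lemmas with the remark ``We conclude with the following theorem,'' treating it as a direct combination of the upper-bound lemmas (general and parametric) with Lemmas~\ref{le12} and~\ref{le14}. The consistency checks you list are handled in the paper's discussion preceding the lemmas, so there is nothing further to add.
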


\section{Some results for Min-OEBP}\label{mino}
In this section we discuss the other variant, called Min-OEBP. In this case, FF is defined similarly to Max-OEBP, with the difference in the definition of validity. For example, if the input consists of items of sizes $0.9$, $0.4$, $0.4$, then the second item can be packed with the first item for both variants, but the third item cannot be added to the bin for Min-OEBP, while it can be added for Max-OEBP. This holds as the sum for Max-OEBP is calculated without the largest item and the sum for Min-OEBP is calculated without the smallest item (this sum has to be strictly below $1$ in both cases). For Min-OEBP it is possible that greedy algorithms will create bins with total size strictly below $1$ (see \cite{LYX10T}). In the example above, if the items of sizes $0.4$ arrive before the item of size $0.9$, then a new bin has to be opened.

Min-OEBP was studied with respect to greedy algorithms \cite{LYX10T}, not only for the general case but also for the parametric case with $\beta$ being a reciprocal of an integer. Generalizing to arbitrary values of $\beta$ seems to be hard and we do not deal with that (see some comments regarding this below). Here, we study the PoC for this problem, and see that the parameterized case is also hard for this problem too.
In the analysis we use NFD (which is the same algorithm as FFD for this problem). These algorithms are defined as NF and FF with validity defined for Min-OEBP, and sorted inputs. For NFD (and FFD), all bins possibly excluding the last bin will contain a total size of items of at least $1$ (unlike NF and FF). This holds since at each time during the execution of the algorithm the considered item is the smallest one, so it can be added to any bin whose load is strictly below $1$. It will always hold that the last item packed into the bin is in fact its smallest item.

The greedy algorithms analyzed by \cite{LYX10T} are NF, WF, and FF, for which the asymptotic approximation ratios are $4$, $3$, and $2$, respectively. For NFD (and FFD) the ratio is the same as for the harmonic algorithm $\frac{71}{60}$.

\begin{theorem}\label{tm16}
The PoC of Min-OEBP is equal to $4$ for the general case.
\end{theorem}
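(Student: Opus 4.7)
I would split the proof into an upper bound $\sum_{j=1}^{\ell} OPT_j \leq 4\cdot OPT$ and a matching lower-bound instance family. For the upper bound the plan is to use the trivial weight function $w(x)=x$, so that $W_j$ equals the total size of items in cluster $j$ and $W=\sum_j W_j$. Every Min-OEBP bin of a globally optimal solution has, writing $s$ for its smallest item, total size strictly below $1+s\leq 2$; hence $W<2\cdot OPT$.

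To bound each $W_j$ from below I would apply NFD (which coincides with FFD for Min-OEBP) to cluster $j$. The discussion immediately preceding the theorem gives the key fact: every NFD bin except possibly the last has load at least $1$, because a bin of load strictly below $1$ accepts the next (smaller) item while remaining valid. This yields $W_j\geq NFD(I_j)-1\geq OPT_j-1$, and since $OPT_j\geq 2$ by the clustering assumption the inequality strengthens to $W_j\geq OPT_j/2$. Summing and combining with $W<2\cdot OPT$ gives $\sum_j OPT_j\leq 2W<4\cdot OPT$, which is the desired upper bound on the PoC.

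For the lower bound I would exhibit clusters whose clustered cost is about four times the global cost. Fix a large integer $M$ and constants $\eps'\ll\eps\ll 1$, and set $k=\lceil \eps/\eps'\rceil+1$. Take $M$ identical clusters, each consisting of one ``big'' item of size $1-\eps$ together with $k$ ``tiny'' items of size $\eps'$. In each cluster the quantity total-minus-smallest equals $1-\eps+(k-1)\eps'\geq 1$, so one bin does not suffice while two do, giving $OPT_j=2$ and clustered cost $2M$. Globally, pair up the big items: a bin $\{1-\eps,1-\eps\}$ is valid since total-minus-smallest equals $1-\eps<1$, whereas adding any tiny item drops the smallest to $\eps'$ and makes total-minus-smallest equal to $2(1-\eps)\geq 1$, which is invalid. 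The remaining $Mk$ tiny items are therefore packed on their own in $O(Mk\eps')=O(M\eps)$ further bins, so the global cost is $M/2+O(M\eps)$; taking limits in the order $\eps'\to 0$, $\eps\to 0$, $M\to\infty$ drives the ratio to $4$.

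The main obstacle lies in the lower-bound construction: the ``big'' size must be strictly below $1$ so that two big items coexist in a Min-OEBP bin, while the tiny items must be small enough to destroy any such bin when added and yet contribute only a negligible number of bins globally, which forces the two scales $\eps$ and $\eps'$ to be chosen in the correct order. On the upper-bound side the clean point is that the assumption $OPT_j\geq 2$ is precisely what promotes the weak NFD inequality $W_j\geq OPT_j-1$ into the stronger $W_j\geq OPT_j/2$ required to obtain the ratio $4$.
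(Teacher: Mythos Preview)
Your proof is correct and follows essentially the same route as the paper. The upper bound is identical in substance: apply NFD to each cluster, use that every bin but the last has load at least $1$ to get $W_j\ge A_j-1\ge A_j/2\ge OPT_j/2$, and combine with $W<2\cdot OPT$. For the lower bound the paper uses a single-parameter family (clusters with one item of size $1-\tfrac{1}{N^2}$ and four items of size $\tfrac{2}{N^2}$, $2N$ clusters) in place of your three-scale construction $(M,\eps,\eps')$, but the mechanism---pair the big items globally, have just enough tiny mass to force two bins per cluster while contributing negligibly to the global cost---is the same.
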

\begin{proof}
First, we show that For any input $I$ for min-OEBP, it holds that $\sum_{j=1}^{\ell} OPT_j \leq 4\cdot OPT$.
The proof of the upper bound is simple, and it is possible to use a similar proof to the analysis of Next Fit (NF) for Min-OEBP \cite{LYX10T}. We will prove the upper bound using NFD, for which the proof is even shorter.
%
%
%
Use NFD for every cluster. Every bin except for possibly the last one has a total size no smaller than $1$, the last bin is not empty, and $A_j\geq 2$.
Thus, the total size of items for this cluster, which we denote by $X_j$ is above $A_j-1 \geq \frac{A_j}2$, and we have $\sum_{j=1}^{\ell} OPT_j \leq \sum_{j=1}^{\ell} A_j \leq \sum_{j=1}^{\ell} 2\cdot X_j = 2\cdot X \leq 4 \cdot OPT$.

Next, we show the lower bound, that the PoC of min-OEBP is at least $4$.
The proof of the lower bound is also related to the analysis of Next Fit (NF) for Min-OEBP \cite{LYX10T}, even though packing clusters using NF for the sake of proof would not allow us to prove the bound. Consider the example where every cluster has four items of sizes $\frac 2{N^2}, \frac 2{N^2}, 1-\frac 1{N^2}, \frac 2{N^2}, \frac 2{N^2}$ for a large positive integer $N$,  and the number of clusters is $2N$.

If NF is applied for each cluster, the third item is packed into a new bin, and the fifth item is also packed into a new bin.
An optimal solution for every cluster cannot use just one bin, but it does not require three bins. There is a different solution for each cluster uses two bins, one with the largest item, and one with the other items, and this is the optimal solution for every cluster.
Thus, the cost for all clusters is $4N$, while an optimal solution for all items packs the items of sizes $1-\frac 1{N^2}$ in pairs into $N$ bins, and all other items have a total size of $2N \cdot 4 \cdot \frac 2{N^2}=\frac 8N$, and they are packed into one bin. This example provides us a lower bound of $4$ on the PoC, since the PoC is equal to $\frac{4N}{N+1}$, which tends to $4$ as $N$ grows to infinity.
\end{proof}

\medskip

We use the next lemma in the analysis of the parametric case. We will consider two interesting special cases for the PoC. One case is where $\beta$ is a reciprocal of an integer, and the other case is when $1-\beta$ has this property. The cases $\beta=1-\frac 13$ and $\beta=1-\frac 14$ are proved using a complete analysis for $[0.5 \leq \beta \leq 0.8]$.

\begin{lemma}\label{le17}
The PoC of min-OEBP is and $\beta \in [\frac 1{t+1},\frac 1{t})$, where $t\geq 1$ is an integer, is at least $2\cdot \frac {t+2}{t+1}$, and it is at least $\frac{4(k+1)}{k+3}$ for $\beta= 1- \frac 1k$ for an integer $k\geq 5$.
\end{lemma}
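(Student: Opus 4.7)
The plan is to establish both lower bounds by explicit constructions of $N$-cluster input families, following the strategy of Lemma~\ref{le12} for the general-case bound PoC~$\ge 4$. In each construction, every cluster has optimal cost exactly $2$, forced by the Min-OEBP validity constraint that the total size excluding the smallest item must be strictly below $1$; meanwhile the globally optimal packing benefits from cross-cluster combinations that bring the total number of bins well below $2N$.

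For the first bound $2(t+2)/(t+1)$ at $\beta \in [\frac{1}{t+1},\frac{1}{t})$, I would let each cluster contain $t+2$ items of size at or near $\beta$ (at least $\frac{1}{t+1}$) together with one tiny filler item. Any $t+2$ items of size $\ge \frac{1}{t+1}$ make a bin invalid, since their total minus the smallest is at least $1$; so no cluster fits in a single bin. Globally, the tiny fillers from different clusters can be pooled so that each global bin carries $t+2$ items of size near $\frac{1}{t+1}$ with a tiny filler as its smallest, packing one extra item per bin relative to the clustered accounting. This yields roughly $N(t+1)/(t+2)$ global bins and hence the asymptotic ratio $2(t+2)/(t+1)$ as $N \to \infty$.

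For the second bound $\frac{4(k+1)}{k+3}$ at $\beta = 1-\frac{1}{k}$, $k \ge 5$, I would mirror the general-case PoC~$=4$ construction from Lemma~\ref{le12}. Take $2N$ clusters, each consisting of one large item of size $\beta$ together with a small number of tiny items whose total size is just above $1-\beta = \frac{1}{k}$ (forcing cluster opt $=2$). The global packing pairs the $2N$ large items into $N$ bins -- valid because excluding the smaller item of a pair leaves $\beta < 1$ -- and uses $\Theta(N/k)$ additional bins for the small items. A careful calibration of the small-item sizes and counts -- for instance by introducing a medium-sized item per cluster, or by choosing smalls of size exactly $\frac{1}{k}$ so that one large plus one small is valid but one large plus two smalls is not -- pushes the ratio from the naive $\frac{4k}{k+2}$ up to the claimed $\frac{4(k+1)}{k+3}$.

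The main obstacle is this precise calibration. Naive versions of these constructions fall slightly short of the target: the ratio $\frac{4k}{k+2}$ differs from $\frac{4(k+1)}{k+3}$ by $\frac{8}{(k+2)(k+3)}$, and the natural $\frac{2(t+1)}{t+2}$ falls below $\frac{2(t+2)}{t+1}$. Matching the stated expressions exactly requires a more delicate cluster composition and global packing, whose precise form must be worked out case by case (in particular for $t=1$ and for $k=5$, where both bounds coincide at $3$).
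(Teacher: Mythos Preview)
Your proposal identifies the right high-level strategy but contains a genuine error in the first construction and leaves the second one essentially unworked.

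For the bound $2(t+2)/(t+1)$, your global packing is invalid for Min-OEBP. You put $t+2$ items of size at least $\tfrac{1}{t+1}$ into a bin together with a tiny filler; but the filler is then the smallest item, and after removing it the remaining total is $(t+2)\beta \ge \tfrac{t+2}{t+1} > 1$, so the bin is illegal. Consequently your construction cannot beat the trivial $t+1$-per-bin global packing, and the ratio you actually get is $2(t+1)/(t+2) < 2$, not $2(t+2)/(t+1)$. The paper's construction avoids this by taking the large items to be slightly \emph{below} $\tfrac{1}{t+1}$ (namely $\tfrac{1}{t+1}-\tfrac{1}{(t+2)N}$). Then $t+2$ identical large items form a valid bin with no filler at all, since removing one leaves a total just under $1$. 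Each cluster consists of only $t{+}1$ large items together with two small items of size $\tfrac{1}{N}$; the cluster needs two bins because $t{+}1$ large plus one small already exceeds $1$ after removing the smallest. With $(t+2)N$ clusters, the clustered cost is $2(t+2)N$ while the global cost is $(t+1)N + O(1)$, giving the claimed ratio.

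For the bound $\tfrac{4(k+1)}{k+3}$, you correctly sense that a medium item is needed, but you stop at ``must be worked out.'' The paper's cluster is one large item of size $\beta=1-\tfrac{1}{k}$, one medium item of size $\tfrac{1}{k}-\tfrac{1}{2N}$, and two small items of size $\tfrac{1}{N}$; large plus medium plus one small already sums to $1+\tfrac{1}{2N}$, forcing cluster-opt $=2$. Globally the $2(k+1)N$ large items pair into $(k+1)N$ bins, the $2(k+1)N$ medium items pack $k{+}1$ per bin into $2N$ bins, and the small items take $O(1)$ bins, yielding global cost $(k+3)N + O(1)$ and hence the ratio $\tfrac{4(k+1)}{k+3}$. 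The ``$+1$'' in the numerator comes precisely from packing $k{+}1$ (not $k$) medium items per global bin, which is why the medium size must be strictly below $\tfrac{1}{k}$.
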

\begin{proof}
For the first lower bound, let $N$ be a large integer and let the input consist of $(t+1)(t+2)N$ items of size $\frac{1}{t+1}-\frac{1}{(t+2)N}$ (large items). There are also $2(t+2)N$ items of size $\frac{1}{N}$ (small items). An optimal solution has $(t+1)\cdot N$ bins with $t+2$ large items. The bins are valid since the total size of $t+1$ large items is strictly below $1$. It also has $2(t+2)$ bins with $N$ small items each.

For the algorithm, every cluster has $t+1$ large items and two small items. Thus, there are $(t+2)N$ clusters. Every cluster requires two bins, since the total size of $t+1$ large items and one small item is $1-\frac{t+1}{(t+2)N}+\frac 1N=1+\frac{1}{(t+2)N}$ (and it is possible to use exactly two bins by packing large items  separately from small items). Thus, the cost of packing using clusters is $2(t+2)N$, and the ratio between the optimal costs with and without clusters is $\frac{2(t+2)N}{(t+1)\cdot N+2(t+2)}$. Letting $N$ grow without bound, the ratio is $\frac{2(t+2)}{t+1}$.

For the second lower bound, let $N$ be a large integer.  Let the input consist of $2(k+1)N$ items of size $\beta$ (large items), $2(k+1)N$  items of size $\frac 1k - \frac{1}{2N}$ (medium items), and $4(k+1)N$ items of size $\frac{1}{N}$ (small items). An optimal solution has $(k+1)N$ bins with two large items each, $2N$ bins with $k+1$ medium items each, and $4(k+1)$ bins with $N$ small items each.

For the algorithm, every cluster one large item, one medium item, and two small items. Thus, there are $2(k+1)N$ clusters. Every cluster requires two bins, since the total size of one large item, one medium item, and one small item is $(1-\frac 1k)+(\frac 1k-\frac 1{2N})+\frac 1N=1 +\frac 1{2N}$ (and it is possible to use exactly two bins by packing large items separately from small items). Thus, the cost of packing using clusters is $4(k+1)N$, and the ratio between the optimal costs with and without clusters is $\frac{4(k+1)N}{(k+1)N+2N+4(k+1)}$. Letting $N$ grow without bound, the ratio is $\frac{4(k+1)}{k+3}$.
\end{proof}

\medskip

Before continuing to upper bounds for the parametric case, we briefly discuss the known analysis of greedy heuristics for the parametric case with $\beta=\frac 1q$ for any integer $q\geq 1$  \cite{LYX10T}.
For NF and WF, in the case $\beta=\frac 1k$, the asymptotic approximation ratio is $\frac{k+1}{k-1}$. For FF it is $\frac{k+1}k$, and for NFD (and FFD) it is still $\frac{71}{60}$ for $k=2$, and it is $1+\frac{1}{k+2}-\frac{1}{k(k+1)(k+2)}$ for odd $k$ and $1+\frac{1}{k+2}-\frac{2}{k(k+1)(k+2)}$ for even $k$. Thus, the ratio is $\frac{71}{60}$ for $\beta\geq \frac 13$.
For WF the ratio is $3$ for $\beta=1$ and $\beta=\frac 12$, and therefore the ratio is $3$ for all $\frac 12 \leq \beta \leq 1$. The ratio is $2$ for $\beta=\frac 13$.

However, if we try to analyze intermediate values of $\beta$, such as $\beta=0.45$, an example with very small items, items of sizes just below $0.1$ and items of size $0.45$ (every bin of the algorithm has one item of size $0.45$, one item of size slightly below $0.1$ and a small number of very small items) gives a lower bound of $\frac{33}{14}\approx  2.35714$, since an optimal solution can pack identical items into every bin.  It is not difficult to prove an upper bound below $3$ for every $\frac 13< \beta < \frac 12$, since no bin can have a total size of items above $1+\beta$, and no bin of any AF algorithm that is not the last bin  can have a total size below $1-\beta$. For $\beta>\frac 5{12}$, even a simple example with items of size $\beta$ and very small items (so that every bin of the algorithm has one items of size $\beta$ and very small items of size slightly above $1-2\beta$) already gives a lower bound strictly above $2$.

\begin{theorem}\label{tm18}
The PoC of min-OEBP is  equal to $3$ for $\beta\in [\frac 12,0.8]$. It is equal to $\frac{4(k+1)}{k+3}$
for $\beta = 1- \frac 1k$ and any integer $k\geq 5$.  It is equal to $2\cdot \frac{t+2}{t+1}$ for $\beta=\frac 1{t+1}$ and an integer $t\geq 0$.
\end{theorem}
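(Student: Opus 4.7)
The plan is to handle the three parts separately; in each case the lower bound comes directly from Lemma~\ref{le17} and the upper bound from a weight function combined with NFD applied per cluster. For Part 3 the first construction with parameter $t$ gives ratio $\tfrac{2(t+2)}{t+1}$; for Part 2 the second construction with parameter $k$ gives $\tfrac{4(k+1)}{k+3}$; and for Part 1 the first construction with $t=1$ gives ratio $3$ and is admissible for any $\beta\in[\tfrac12,0.8]$ since its items have size at most $\tfrac12\le\beta$.

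For the upper bound of Part 3 I would use the identity weight $w(x)=x$. By min-OEBP validity any global $\OPT$ bin has total size below $1+\beta=\tfrac{t+2}{t+1}$, so $W<\tfrac{t+2}{t+1}\OPT$. Applying NFD (which for min-OEBP packs items in non-increasing size order and closes a bin only once its load first reaches $1$) to each cluster, each of the first $A_j-1$ bins has load, and hence weight, at least $1$, yielding $W_j\ge A_j-1\ge OPT_j-1\ge OPT_j/2$ since $OPT_j\ge 2$. Summing over clusters and combining with the $\OPT$ bound gives $\sum_j OPT_j \le \tfrac{2(t+2)}{t+1}\OPT$.

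For Parts 1 and 2 the identity weight yields only $2(1+\beta)\OPT$, too weak, so I would switch to the truncated weight $w(x)=\min\{x,T\}$ with $T=\tfrac12$ for Part 1 and $T=\tfrac{k-1}{k+3}$ for Part 2. Any $\OPT$ bin then has weight at most $U:=1+T$, equal to $\tfrac32$ and $\tfrac{2k+2}{k+3}$; the target ratios $3$ and $\tfrac{4(k+1)}{k+3}$ are exactly $2U$, so the goal reduces to the aggregate cluster bound $W\ge \tfrac12\sum_j OPT_j$ (up to additive terms in $\ell$). Under NFD, items of size $>T$ are packed first; any two such items sum to more than $2T\ge 1$, so a bin containing two of them closes immediately and has weight $\ge 1$, while bins composed only of items of size $\le T$ have weight equal to load $\ge 1$. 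The difficulty concentrates in the at most one ``mixed'' bin per cluster holding one item of size $>T$ together with smaller items, whose weight is only $\ge T+(1-\beta)$, which is below $1$ in both cases (for instance $0.7$ when $T=\tfrac12$ and $\beta=0.8$).

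This mixed bin is the main obstacle: since $\ell$ can be as large as $\tfrac12\sum_j OPT_j$ (each cluster satisfying $OPT_j\ge 2$), and each cluster potentially contributes both a mixed bin and a lightly filled last bin, a naive per-bin sum falls short of $\tfrac12\sum_j OPT_j$ by a factor close to $2$. Closing the gap requires an aggregate argument, for example redistributing the mixed-bin deficit against an accompanying paired-large bin in the same cluster (one must exist whenever the cluster contains more than one item of size $>T$), or refining the weight function on $(T,\beta]$ so that large items absorb the deficit. The specific choices $T=\tfrac12$ and $T=\tfrac{k-1}{k+3}$ are calibrated precisely so that this refined accounting matches the lower bounds given by Lemma~\ref{le17}.
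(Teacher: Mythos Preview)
Your Part~3 argument is correct and matches the paper. The gap is in Parts~1 and~2: you yourself flag that with $w(x)=\min\{x,T\}$ the mixed bin carrying one large item plus small items of total size $\ge 1-\beta$ has weight only $T+(1-\beta)<1$, and you do not actually close this. Your proposed fixes do not work. The aggregate redistribution fails because a cluster may contain exactly one item of size $>T$ (so there is no paired-large bin to absorb the deficit), and this can happen in every cluster simultaneously. Refining $w$ on $(T,\beta]$ is the wrong end: raising weights there only makes the $\OPT$-bin bound worse without helping the mixed bin, since the deficit comes from the \emph{small} items, whose weight equals their size under your function.

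The missing idea is to modify $w$ on small items, boosting them above their size. The paper's weight function (with cap $\tfrac{k+1}{k+3}$, not your $\tfrac{k-1}{k+3}$) has four pieces: it sets $w(x)=\tfrac{2k}{k+3}\,x$ on $(0,\tfrac1k]$ (note $\tfrac{2k}{k+3}>1$ for $k\ge 5$), $w(x)=\tfrac{2}{k+3}$ on $(\tfrac1k,\tfrac{2}{k+3}]$, $w(x)=x$ on $(\tfrac{2}{k+3},\tfrac{k+1}{k+3}]$, and $w(x)=\tfrac{k+1}{k+3}$ above. With this, any set of items of total size $\ge 1$ automatically has weight $\ge 1$: in the mixed-bin case the single large item contributes $\tfrac{k+1}{k+3}$ and the remaining items, of total size $\ge 1-\beta=\tfrac1k$, contribute at least $\tfrac{2}{k+3}$ (either via one item of size $>\tfrac1k$, or via the multiplier $\tfrac{2k}{k+3}$ on items $\le\tfrac1k$), giving exactly $1$. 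The $\OPT$-bin bound $\tfrac{2(k+1)}{k+3}$ still holds after a short case analysis on the smallest item, using $k\ge 5$ for the middle case. Part~1 then follows from Part~2 with $k=5$, since the upper bound for $\beta=0.8$ applies to all smaller $\beta$.
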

Note that the result indeed tends to $4$, which is the result for $\beta=1$, that is, for $k$ growing to infinity. The case $t=0$ is in fact the case $\beta=1$ analyzed earlier.

\medskip

\noindent\begin{proof}
The lower bounds follow from the previous lemma. Specifically, the lower bound for  $[\frac 12,0.8]$ follows from the case $\beta=\frac 12$.

For $\beta=\frac 1{t+1}$, the upper bound holds because every bin of any cluster is full by more than half on average, while no bin of an optimal solution can be full by $1+\beta$ or more. We will consider the case $\beta=1-\frac 1k$ for $k\geq 5$, where the upper bound for $k=5$ will imply the upper bound for $\beta\in [\frac 12,0.8]$.

Let $$w(x)=\begin{cases} \vspace{0.3cm}
  \frac{k+1}{k+3}  \ {  \mbox \it \  \  \ \ \  \ \ \ \ for  \ \ \  \ \ \ \ \ }  \ \ \  x  \in (\frac {k+1}{k+3},\frac{k-1}k=\beta]\\
 x  \ { \mbox \it \  \ \ \ \ \ \  \   \  \  \ \ \ for \ \ \  \ \ \  \ \ \ \ \ }   x  \in (\frac 2{k+3},\frac{k+1}{k+3}]\\
 \frac{2}{k+3}  \ \ \ \  \  \  \ \ \ \ { \mbox \it \  for  \ \ \ \ \ }  \ \ \ \ \  x  \in (\frac 1k,\frac 2{k+3}] \ \ \\
 \frac{2k}{k+3}\cdot x  { \mbox \it \   \  \  \ \ \ for \  \ \ \ \ \ \ \  \ \  \  } x  \in (0,\frac 1k]\\
\end{cases}$$ be a weight function, which we will use in the analysis. The function is continuous and monotonically non-decreasing.

Note that $\frac{k+1}{k+3} \geq \frac 12$ and  $\frac{2}{k+3} \leq \frac 12$ hold for $k\geq 1$, so $\frac 12$ belongs to the interval $(\frac 2{k+3},\frac{k+1}{k+3}]$. Thus, for $x \in (\frac 12,\beta]$, we have $w(x)=\min\{x,\frac{k+1}{k+3}\}$.

Consider a bin of an optimal solution, and let $y$ be its smallest item as well as the size of the smallest item. The total size of other items of the same bin is below $1$. If $y>\frac 12$, all the items of the bin are larger than $\frac 12$, and there are at most two such items. Thus, the total weight is at most $\frac{2(k+1)}{k+3}$. If $\frac{2}{k+3}\leq x \leq \frac 12$, no item has size below $\frac{2}{k+3}$, and no item has weight larger than its size. The total size of items is at most $\frac 32$, and the same holds for the total weight. We have  $\frac 32 \leq \frac{2(k+1)}{k+3}$ for $k\geq 5$, so the required total weight for the bin is not exceeded. Finally, in the case where $y\leq \frac{2}{k+3}$, the weight of an item is at most $\frac{2k}{k+3}$ times its size and no item has weight above $\frac{2}{k+3}$, and the total weight for the bin is at most $\frac{2k}{k+3}+\frac{2}{k+3}=\frac{2(k+1)}{k+3}$.

Consider a cluster, and apply NFD on its items. Let $A_i$ be the number of resulting bins, where the optimal number of bins is at most $A_i$. Recall that the total size for every bin, except for possibly the last one, is at least $1$.

We will show that the total weight will always be above $1$ for a set of items whose total size is at least $1$. If there are no items with sizes above $\frac{k+1}{k+3}$, no item has weight smaller than its size (since  $\frac{2(k+1)}{k+3} \geq 1$) and the claim follows. If there are at least two such items, the total weight is at least $\frac{2(k+1)}{k+3}>1$. If there is exactly one such item, the remaining items have total size of at least $1-\beta=\frac 1k$. If there is an item of size above $\frac 1k$, its weight is at least $\frac{2}{k+3}$ (by the monotonicity of $w$), and the total weight is at least $\frac{k+1}{k+3}+\frac{2}{k+3}=1$. Otherwise, the total weight of remaining items is at least $\frac{2k}{k+3}\cdot \frac 1k=\frac{2}{k+3}$, and once again the total weight is at least $\frac{k+1}{k+3}+\frac{2}{k+3}=1$.

Since $A_i-1 \geq \frac{A_i}2$ holds for $A_i \geq 2$, the total weight for the cluster is at least $\frac{A_i}2$, and it is at least half of the number of all bins for all cluster in total.
\end{proof}

Other cases of $\beta$ are more difficult.
For example, for $\beta=1-\frac 16-\frac 1{42} =\frac{17}{21}\approx 0.8095238$, one can use items of sizes close to $\frac 16$ and $\frac 1{42}$ instead of items of sizes close to $\frac 15$ to obtain a construction where in an optimal solution there are bins with seven items of sizes close to $\frac 16$, and bins with $43$ items of sizes close to $\frac 1{42}$. This indeed gives a lower bound of approximately $3.002493765586$ on the PoC.
It may seem that for $\beta=1-\frac 16-\frac 1{36}\approx 0.80556$ we can also get a result larger than $3$, but this construction is inferior to the one giving a lower bound of $3$.


\section{Conclusion}
We have analyzed greedy algorithms for Max-OEBP, and the price of clustering for Max-OEBP and for Min-OEBP.  These two problems have asymptotic polynomial time approximation schemes \cite{LDY01,EL08,LYX10T,BEL20,EL20}, and so do the other two variants. However, there are gaps in the study of online algorithms \cite{LDY01,EL20,BEL20}. Another interesting research direction is the study of additional variants of bin packing with respect to the type of analysis given here.






\bibliographystyle{abbrv}

\bibliography{pocv}

\begin{thebibliography}{10}

\bibitem{ACFR16}
Y.~Azar, I.~R. Cohen, A.~Fiat, and A.~Roytman.
\newblock Packing small vectors.
\newblock In {\em Proc. of the 27th Annual {ACM-SIAM} Symposium on Discrete
  Algorithms, (SODA2016)}, pages 1511--1525, 2016.

\bibitem{AESV}
Y.~Azar, Y.~Emek, R.~van Stee, and D.~Vainstein.
\newblock The price of clustering in bin-packing with applications to
  bin-packing with delays.
\newblock In {\em The 31st {ACM} on Symposium on Parallelism in Algorithms and
  Architectures, (SPAA2019)}, pages 1--10, 2019.

\bibitem{BC81}
B.~S. Baker and E.~G. {Coffman, Jr.}
\newblock A tight asymptotic bound for next-fit-decreasing bin-packing.
\newblock {\em SIAM J. on Algebraic and Discrete Methods}, 2(2):147--152, 1981.

\bibitem{BBDEL_ESA18}
J.~Balogh, J.~B{\'{e}}k{\'{e}}si, G.~D{\'{o}}sa, L.~Epstein, and A.~Levin.
\newblock A new and improved algorithm for online bin packing.
\newblock In {\em Proc. of the 26th European Symposium on Algorithms
  (ESA2018)}, pages 5:1--5:14, 2018.

\bibitem{BBDEL_newlb}
J.~Balogh, J.~B{\'{e}}k{\'{e}}si, G.~D{\'{o}}sa, L.~Epstein, and A.~Levin.
\newblock A new lower bound for classic online bin packing.
\newblock {\em Algorithmica}, 2021.
\newblock To appear.

\bibitem{BBDGT}
J.~Balogh, J.~B\'{e}k\'{e}si, G.~D\'{o}sa, G.~Galambos, and Z.~Tan.
\newblock Lower bound for 3-batched bin packing.
\newblock {\em Discrete Optimization}, 21:14--24, 2016.

\bibitem{BBG}
J.~Balogh, J.~B{\'e}k{\'e}si, and G.~Galambos.
\newblock New lower bounds for certain classes of bin packing algorithms.
\newblock {\em Theoretical Computer Science}, 440:1--13, 2012.

\bibitem{BEL20}
J.~Balogh, L.~Epstein, and A.~Levin.
\newblock More on ordered open end bin packing.
\newblock {\em CoRR}, abs/2010.07119, 2020.

\bibitem{Cs93}
J.~Csirik.
\newblock The parametric behavior of the {First-Fit Decreasing} bin packing
  algorithm.
\newblock {\em Journal of Algorithms}, 15(1):1--28, 1993.

\bibitem{Do15}
G.~D\'{o}sa.
\newblock Batched bin packing revisited.
\newblock {\em Journal of Scheduling}, 20(2):199--209, 2017.

\bibitem{DS12}
G.~D\'osa and J.~Sgall.
\newblock {First Fit} bin packing: A tight analysis.
\newblock In {\em Proc. of the 30th International Symposium on Theoretical
  Aspects of Computer Science (STACS2013)}, pages 538--549, 2013.

\bibitem{DLHT}
G.~Dósa, R.~Li, X.~Han, and Z.~Tuza.
\newblock Tight absolute bound for first fit decreasing bin-packing: ${FFD(L)}
  \leq 11/9 {OPT(L)} + 6/9$.
\newblock {\em Theoretical Computer Science}, 510:13--61, 2013.

\bibitem{Epstein16}
L.~Epstein.
\newblock More on batched bin packing.
\newblock {\em Operations Research Letters}, 44(2):273--277, 2016.

\bibitem{E20}
L.~Epstein.
\newblock On bin packing with clustering and bin packing with delays.
\newblock {\em CoRR}, abs/1908.06727, 2019.

\bibitem{EL08}
L.~Epstein and A.~Levin.
\newblock Asymptotic fully polynomial approximation schemes for variants of
  open-end bin packing.
\newblock {\em Information Processing Letters}, 109(1):32--37, 2008.

\bibitem{EL20}
L.~Epstein and A.~Levin.
\newblock A note on a variant of the online open end bin packing problem.
\newblock {\em Operations Research Letters}, 48(6):844--849, 2020.

\bibitem{GZ09}
L.~Gai and G.~Zhang.
\newblock Hardness of lazy packing and covering.
\newblock {\em Operations Research Letters}, 37(2):89--92, 2009.

\bibitem{GJYbatch}
G.~Gutin, T.~Jensen, and A.~Yeo.
\newblock Batched bin packing.
\newblock {\em Discrete Optimization}, 2(1):71--82, 2005.

\bibitem{J74}
D.~S. Johnson.
\newblock Fast algorithms for bin packing.
\newblock {\em Journal of Computer and System Sciences}, 8:272--314, 1974.

\bibitem{JoDUGG74}
D.~S. Johnson, A.~Demers, J.~D. Ullman, M.~R. Garey, and R.~L. Graham.
\newblock Worst-case performance bounds for simple one-dimensional packing
  algorithms.
\newblock {\em SIAM Journal on Computing}, 3:256--278, 1974.

\bibitem{LeeLee85}
C.~C. Lee and D.~T. Lee.
\newblock A simple online bin packing algorithm.
\newblock {\em Journal of the ACM}, 32(3):562--572, 1985.

\bibitem{LDY01}
J.~Y.-T. Leung, M.~Dror, and G.~H. Young.
\newblock A note on an open-end bin packing problem.
\newblock {\em Journal of Scheduling}, 4(4):201--207, 2001.

\bibitem{LYX10A}
M.~Lin, Y.~Yang, and J.~Xu.
\newblock Improved approximation algorithms for maximum resource bin packing
  and lazy bin covering problems.
\newblock {\em Algorithmica}, 57(2):232--251, 2010.

\bibitem{LYX10T}
M.~Lin, Y.~Yang, and J.~Xu.
\newblock On lazy bin covering and packing problems.
\newblock {\em Theoretical Computer Science}, 411(1):277--284, 2010.

\bibitem{theFFD}
P.~Ongkunaruk.
\newblock {\em Asymptotic Worst-Case Analyses for the Open Bin Packing
  Problem}.
\newblock PhD thesis, Virginia Polytechnic Institute and State University,
  2005.

\bibitem{RaBrLL89}
P.~Ramanan, D.~J. Brown, C.~C. Lee, and D.~T. Lee.
\newblock Online bin packing in linear time.
\newblock {\em Journal of Algorithms}, 10:305--326, 1989.

\bibitem{Vliet92}
A.~van Vliet.
\newblock An improved lower bound for online bin packing algorithms.
\newblock {\em Information Processing Letters}, 43(5):277--284, 1992.

\bibitem{Xu00}
K.~Xu.
\newblock The asymptotic worst-case behavior of the {FFD} heuristic for small
  items.
\newblock {\em Journal of Algorithms}, 37(2):237--246, 2000.

\bibitem{YangL03}
J.~Yang and J.~Y. Leung.
\newblock The ordered open-end bin packing problem.
\newblock {\em Operations Research}, 51(5):759--770, 2003.

\bibitem{Zhang98}
G.~Zhang.
\newblock Parameterized on-line open-end bin packing.
\newblock {\em Computing}, 60(3):267--274, 1998.

\end{thebibliography}

\end{document}